\def\eqdef{\stackrel{\triangle}{=}}
\newcommand{\eqlinebreakshort}{\ensuremath{\nonumber \\ & \quad \quad}}
\newcommand{\degree}{\ensuremath{\mathsf{deg}}}
\newcommand{\diag}{\ensuremath{\mathsf{diag}}}
\newcommand{\fake}{d}
\newcommand{\dist}{\text{dist}}
\newcommand{\error}{\text{err}}
\newcommand{\dermat}[1]{\ensuremath{\bJ_{#1}}}
\newcommand{\revise}[1]{\textcolor{black}{#1}}
\newcommand{\revisered}[1]{\textcolor{black}{#1}}
\newcommand{\reviseblue}[1]{\textcolor{black}{#1}}
\newif\ifeight
\newif\ifshowproof
\newif\ifsixteen
\begin{document}

\title{\LARGE \bf
Stochastic Opinion Dynamics under Social Pressure in Arbitrary Networks 
{}
\thanks{The work was supported by ARO MURI W911 NF-19-1-0217 and a Vannevar Bush Fellowship from the Office of the Under Secretary of Defense.
}
}

\author{Jennifer Tang, Aviv Adler, Amir Ajorlou, and Ali Jadbabaie
}
\date{\today}

\maketitle

\thispagestyle{plain}
\pagestyle{plain}


\begin{abstract}
    Social pressure is a key factor affecting the evolution of opinions on networks in many types of settings, pushing people to conform to their neighbors' opinions. To study this, 
    the \emph{interacting P\'{o}lya urn} model was introduced by Jadbabaie et al.~\cite{socialPressure2021}, in which 
    each agent has two kinds of opinion: \emph{inherent beliefs}, which are hidden from the other agents and fixed; and \emph{declared opinions}, which are randomly sampled at each step from a distribution which depends on the agent's inherent belief and her neighbors' past declared opinions (the social pressure component), and which is then communicated to her neighbors.
    Each agent also has a \emph{bias parameter} denoting her level of resistance to social pressure. 
    At every step, each agent updates her declared opinion (simultaneously with all other agents) according to her neighbors' aggregate past declared opinions, her inherent belief, and her bias parameter. 
    We study the asymptotic behavior of this opinion dynamics model and show that the agents' declaration probabilities \revisered{approaches a set of equilibrium points of the expected dynamics} using Lyapunov theory and stochastic approximation techniques. We also derive necessary and sufficient conditions for the agents to approach consensus on their declared opinions. Our work provides further insight into the difficulty of inferring the inherent beliefs of agents when they are under social pressure.
\end{abstract}


\section{Introduction and Related Work}

Opinion dynamics -- the modeling and study of how people's opinions change in a social setting (particularly through communication on a network, whether online or offline) -- is an extremely useful tool for analyzing various social and political phenomena such as consensus and social learning \cite{Noorazar2020} as well as for designing strategies for political, marketing and information campaigns, such as the effort to curb vaccine hesitancy \cite{Ancona2022}. It is generally assumed in such models that the agents report their opinions truthfully. 
In reality, however, there are many occasions in which people make declarations contrary to their real views in order to conform socially \cite{Milosz1953}, a fact confirmed both by common sense and by psychological studies \cite{Asch1956}. This can make it difficult to determine the true beliefs governing observed interactions. 



In this work, we study an \emph{interacting P\'{o}lya urn model} for opinion dynamics, originating from \cite{socialPressure2021}, that captures a system of agents who might be untruthful due to their local social interactions. 
\ifeight 
\else 

This model consists of $n$ agents on a fixed network communicating on an issue with two basic sides, $0$ and $1$. Each agent has an \emph{inherent belief} (true and unchanging), which is either $0$ or $1$, and an honesty parameter $\tilde \gamma$. Then the agents communicate their \emph{declared opinions} to their neighbors at discrete time steps: at each step $t = 1, 2, \dots$, all the agents simultaneously declare one of the two opinions (i.e. either `$0$' or `$1$'), which is then observed by their neighbors; the declarations of all the agents at any given step are made at random and independently of each other but with probabilities determined by their inherent belief, honesty parameter, and the ratio of the two declared opinions observed by the agent up to the current time. This can represent scenarios where agents (say, people using social media) alter their statements to better fit in with the opinions they have observed from others in the past; it may also represent scenarios where the agents update their opinions according to the declared opinions of others, but retain a bias towards their original position.

\fi 
The goal of this model is to shed light on how opinions might evolve in the presence of social pressure. Jadbabaie et al. \cite{socialPressure2021} considered whether it is possible to infer a person's true or inherent belief from their declared opinions under this model, specifically studying an aggregate estimator on a complete graph network.

\ifeight 

\revise{We refer the reader to \cite{socialPressure2021} and \cite{opiniondynamicsCDC} for in-depth discussion of prior work. Here, we discuss some relevant highlights. Influential agent-based opinion dynamics models include the DeGroot model \cite{DeGroot1974}
and the Friedkin-Johnsen model \cite{friedkin1990}. In the DeGroot model, under reasonable network assumptions, all agents eventually agree on the same opinion (represented by a real number), a state called consensus. In the Friedkin-Johnsen model, agents continue to update their opinion with their initial opinion, causing consensus to never occur. Works which also try to model lack of agreement among agent include \cite{friedkin1990,HegselmannKrause2002,Deffuant2002}. 
Other similar models consider competitive contagion and product adoption \cite{kempe2003maximizing, bharathi2007competitive, fazeli2017competitive, amini2009marketing}. Stochastic models which focus on binary opinions include the voter model 
\cite{holleyLiggett1975,  Yildiz2013stubborn}.
Other models where agents have a separate internal and external opinion include \cite{centola_2005} and \cite{Ye2019}. 
Those looking at opinion dynamics under social pressure include \cite{Das2014, amelkin2017, semonsen2018opinion, ferraioli2019social, cheng2019opinion,  liu2021modeling} though there are many others.}

\else 

Opinion dynamics originally grew from a need to mathematically understand psychological experiments on the behavior of individuals in group settings \cite{sherif1935study,Asch1956,French1956}. 
%
Notable among these is the DeGroot model \cite{DeGroot1974}, where agents in a network average their neighbors' opinion in an iterative manner. With this procedure, the entire group asymptotically approaches a state where they all share a single opinion, a phenomenon known as \emph{consensus}.
While the DeGroot model is highly influential, it is clear that consensus is not always approached in reality. This problematic aspect of the DeGroot model and other similar models inspired follow-up work aiming to account for disagreement among agents \cite{friedkin1990,HegselmannKrause2002,Deffuant2002}.
%


However, opinions are often influenced not only by others' opinions but by personal inclinations or beliefs;
for instance, each agent in the Friedkin-Johnsen model \cite{friedkin1990} updates her opinion at each step by averaging her neighbors' opinions (as in the DeGroot model) and then averaging the result with her initial opinion, which represents her innate beliefs. 
%
%
\ifeight 
Other similar models includes ones which adjust whether interactions with neighbors cause an agent to conform or be unique \cite{Altafini2013,duggins2017}; look at bias assimilation \cite{Dandekar_bias_2012}; or consider competitive contagion and product adoption \cite{kempe2003maximizing, bharathi2007competitive, fazeli2017competitive, amini2009marketing}.
\else 
Other models adjust whether interactions with neighbors cause an agent to conform or be unique \cite{Altafini2013,duggins2017}. 
Dandekar et al. \cite{Dandekar_bias_2012} look at bias assimilation, in which agents weigh the average of their neighbors' opinions by an additional bias factor. 
Another relevant line of research is the competitive contagion and product adoption in the marketing literature \cite{kempe2003maximizing, bharathi2007competitive, fazeli2017competitive, amini2009marketing}, where individuals' choices of products and services are influenced both by personal tastes and desires and by others' choices,
a phenomenon commonly known as the network effect. 
\fi 
Authors in \cite{centola_2005} use a threshold diffusion model 
to numerically study
cascades of self‐reinforcing support for a highly unpopular norm on social networks. %

Besides the model in \cite{socialPressure2021}, several other models also include the feature that agents do not always update their initial beliefs \cite{Acemoglu2013, gaitonde2020, Ye2019}.
Ye et al. \cite{Ye2019} study a model in which each agent has both a private and expressed opinion, which evolve differently. Agents' private opinions evolve using the same update as in the Friedkin-Johnsen model whereas agents' public opinion are an average of their own private opinion and the public average opinion. 
Both \cite{centola_2005} and \cite{Ye2019} are very similar to \cite{socialPressure2021}, since agents are willing to express opinions they do not actually believe in. However, unlike \cite{socialPressure2021}, \cite{Ye2019} assumes opinions are precisely expressed on a continuous interval, which is unrealistic for certain applications. On the other hand \cite{centola_2005} works with binary opinions like \cite{socialPressure2021}, but with an additional reinforcement step which adds complexity. The model in \cite{socialPressure2021} captures the core idea of \cite{centola_2005} in a different way that is more tractable for analysis. 

Other models which look at opinion dynamics under social pressure include \cite{Das2014, semonsen2018opinion, ferraioli2019social, cheng2019opinion,  liu2021modeling}. These consider a dynamics which are similar to that in Hegselmann-Krause \cite{HegselmannKrause2002}, DeGroot, or Friedkin-Johnsen and include parameters which measures an agent's resistance to change from their own belief.
In \cite{ferraioli2019social}, the authors consider when social pressure changes throughout time based on a distance function. The update mechanism in these  models is very different that that of the interacting Polya Urn. 
%
Other types of interacting P\'{o}lya urn models have also been used by \cite{hayhoe2019,singh2022} to study contagion networks. 

\fi 


\subsection{Contributions}

While \cite{socialPressure2021} originally proposed an interacting P\'{o}lya urn model for opinion dynamics, they studied it only in the special case of a (unweighted) complete graph as the network, with agents that all have the same honesty parameter $\tilde \gamma$. 
In this work we remove these constraints and study this process on arbitrary undirected graphs with agents whose honesty parameters may differ. 
Our contributions are:
\begin{enumerate}
    
    \item We establish that the behavior of agents (i.e. their probabilities of declaring each opinion) almost surely converges to \revisered{a set of equilibria}.
    \item We determine necessary and sufficient conditions for \emph{consensus}. 
    %
   Due to the stochastic nature of our model we define consensus as a property of the \emph{declared} opinions: the network \emph{approaches consensus} if all agents declare the same opinion (either $0$ or $1$) with probability approaching $1$ as time goes to infinity. This corresponds to cases where social pressure forces increasing conformity over time, and makes estimating the agents' inherent beliefs from their behavior difficult, as shown in \cite{socialPressure2021}.
\end{enumerate}

We discuss more details of these contributions and their comparison with \cite{socialPressure2021} below.

\paragraph*{Behavior of Agent Declaration Probabilities}

\revise{
Previously in \cite{socialPressure2021}, the authors determined that the dynamics of the interacting P\'{o}lya urn model converge to a fixed point and identified this fixed point. However, since their work was restricted to the complete graph, showing this only involved understanding the evolution of a $1$-dimensional variable. For arbitrary networks, it is necessary to characterize the dynamics of an $n$-dimesional process.
}

We use Lyapunov theory and stochastic approximation to determine \reviseblue{ the convergence set} for the opinion dynamics model \revise{when it is a more complicated $n$-dimesional process}. We show that on \revise{arbitrary} undirected networks, the probability that each agent $i$ declares $1$ at the next step converges asymptotically to \reviseblue{some set of points each representing an equilibrium of the expected dynamics of the process.} 

\reviseblue{While we show that the process must converge to a set of equilibrium points, it is unknown whether the process has a unique stable equilibrium point or multiple stable equilibria,} \revisered{ or cycles within a set of connected equilibria.}


%

\paragraph*{Conditions for Consensus}

An interesting result from \cite{socialPressure2021} is that if the proportion of agents (connected in a complete graph) with inherent beliefs $0$ or $1$ passes a certain threshold, then asymptotically the system almost surely converges to \revise{a state where all agents declare the same opinion with probability $1$}. In this work, we find an analogous result for general networks, determining a condition under which all agents in the network almost surely converge to consensus, \revise{a state where the dynamics converge to either the all-zero equilibrium point or the all-one equilbrium point}. The condition is derived by incorporating the structure of the network, the inherent beliefs of the agents and their honesty \reviseblue{(or bias)} parameters. 

\ifeight 
\else 
\paragraph{Analysis of Simplified Community Network}

We apply our convergence and consensus results to study in depth a simplified community network. In this model, there are two communities, $a$ and $b$, which are represented as two agents (or two vertices). To model that each community is more connected to itself than to the other community, vertices $a$ and $b$ have self-loops of greater weight than the edge connecting them. This network is designed to capture \emph{homophily}, a property of real and online communities where people with similar traits, opinions or interests tend to form communities with relatively dense in-community connections \cite{Dandekar_bias_2012}.

We show that whether or not all agents in the network converge to declaring the same opinion (i.e. approach consensus) depends on whether the ratio of the proportion of in-community edges of each community is greater than the honesty parameter.
\fi 

Our contributions give insight on the difficulty of inferring inherent beliefs of agents in the network, a key question explored in \cite{socialPressure2021}, which specifically determined whether it is possible to infer the inherent beliefs of agents from the history of declared opinions. 
\revise{The results were that this estimator may have difficulties in estimating the inherent belief of all agents (even in the limit) if the network approaches consensus. 
How to estimate the honesty parameter (or, equivalently, the bias parameter) of any agent in any network was further studied in \cite{opinionDynamicsPart2}, where it was determined that approaching consensus presented the most difficult case.}

\section{Model Description}


Our model is a slight generalization of the model from \cite{socialPressure2021}, with the addition that each edge in the network has a (nonnegative) weight denoting how much the two agents' declared opinions influence each other. We introduce this generalization as it does not affect our results%
\ifeight 
.
\else 
, and allows us to study the \emph{simplified community network} (\Cref{sec::community_network}) as a compact representation of two interacting communities with regular degrees. 
\fi 
As mentioned, we also extend the model by permitting agents to have different honesty parameters \revise{(quantities we replace with bias parameters)}.

\subsection{Graph Notation}

Let (undirected) graph $G = (V,E)$ be a network of $n$ agents (the vertices) 
labeled $i = 1, 2, \dots, n$, so $V = [n]$. The graph $G$ can have self-loops. For each edge $(i,j) \in E$, there is a weight $a_{i,j} \geq 0$, where by convention we let $a_{i,j} = 0$ if $(i,j) \not \in E$. We denote the matrix of these weights as $\bA \in \bbR^{n \times n}$, i.e. the weighted adjacency matrix of $G$; since $G$ is undirected, $\bA$ is symmetric.



The vector of degrees of all agents is denoted as
$
    \bd \eqdef [\degree(1), \degree(2), \dots, \degree(n)]
$ 
and its diagonalization is denoted $\bD = \diag(\bd)$, i.e. the diagonal matrix of the degrees. Let the \emph{normalized adjacency matrix} be 
$
    \bW = \bD^{-1} \bA \,. \label{eq::graph_matrix}
$
We assume that $\bW$ is irreducible ($G$ is connected) and not bipartite.

We use $\bI$ as the identity matrix. 
We denote the largest eigenvalue of a matrix by $\lambda_{\max}(\cdot)$ (the matrices we use this with have real eigenvalues), and the indicator function by $\bbI\{ \cdot\}$. We denote an all-$0$ vector as $\bzero$ and an all-$1$ vector as $\bone$.

\subsection{Inherent Beliefs and Declared Opinions}

Each agent $i$ has an \emph{inherent belief} $\phi_i\in \{0,1\}$, which does not change.
%
At each time step 
$t$,
each agent $i$ (simultaneously) announces a \emph{declared opinion} $\psi_{i,t} \in \{0,1\}$. 
The declarations $\psi_{i,t}$ are based on a probabilistic rule (which we will define shortly) and is determined by the declared opinions the agents sees in her neighborhood and her \emph{bias parameter}. 
\revise{The bias parameter, internal to agents, measures an agent's inclination to opinion $1$.}
\revise{The bias parameter $\gamma_i$ combines the inherent belief $\phi_i$ and honesty parameter\footnote{We permit heterogeneous honesty parameters, while $\tilde \gamma_1= \ldots=\tilde \gamma_n=\gamma$ in \cite{socialPressure2021}).} $\tilde \gamma_i$ of agent $i$ used in \cite{socialPressure2021} into a single parameter where $\gamma_i = \tilde \gamma_i $ if $\phi_i = 1$ and  $\gamma_i = 1/\tilde \gamma_i $ if $\phi_i = 0$.}
\revise{
In terms of the bias parameter, we can write inherent beliefs as
\begin{align}
    \phi_i = \begin{cases} 1 &\text{if } \gamma_i > 1 \\ 0 &\text{if } \gamma_i < 1 \end{cases}
    \,.
\end{align}
An agent is neutral (has no inherent opinion) if $\gamma_i = 1$.
}
 
We used $\bgamma = [\gamma_1, \dots, \gamma_n]$ to specify the bias parameters of all agents. 
Next, for $t \in \bbZ_{+}$ let
\begin{align}
 M_{i}^0(t) &= m^0_i + \sum_{\tau = 2} ^{t} \sum_{j = 1 }^n a_{i,j} \bbI[\psi_{j, \tau} = 0] \label{eq::M0_counts_def}\\
    M_{i}^1(t) &= m^1_i+ \sum_{\tau = 2} ^{t}\sum_{j=1}^n a_{i,j}  \bbI[\psi_{j, \tau} = 1]\label{eq::M1_counts_def}
\end{align}
where $m^0_i, m^1_i > 0$ represent the initial settings of the model. 
(Initial settings are used in place of declared opinions at time $1$. Some requirements for the initial settings are given shortly.)
The quantity $M_{i}^0(t)$ represents the (weighted) number of times agent $i$ observed a neighbor declare opinion $0$ up to step $t$ (plus initial settings), and $M_{i}^1(t)$ represents the analogous total of observed $1$'s. 
%
%
%
We will normalize these values using the following: let $M_{i}(t) \eqdef m^0_i + m^1_i + (t-1) \, \degree(i) =  M_{i}^0(t) + M_{i}^1(t) $ and then set
$\mu_{i}^0(t) \eqdef  M_{i}^0(t) / M_{i}(t)$ and $ \mu_{i}^1(t) \eqdef  M_{i}^1(t) / M_{i}(t) \,.
$
\ifeight
\else
The parameter $\mu_i^1(t)$ is essentially the sufficient statistic that summarizes the proportion of declared opinions in the neighborhood of given agent $i$ up to time $t$. 
\fi
Since $\mu_i^0(t) = 1 - \mu_i^1(t)$, we simplify the notation to
$
    \mu_i(t) \eqdef \mu_i^1(t)\,.
$



Define the function (note that $\mu, \gamma$ are scalars)
\begin{align}\label{eq::f_def}
f(\mu, \gamma) \eqdef \frac{\gamma \mu}{1 + (\gamma - 1) \mu} = \frac{1}{1+ \frac{1}{\gamma}\left(\frac{1}{\mu} - 1 \right)}
\end{align}
and then let the probability of declared opinions be
\begin{align}\label{eq::declared_opinion_prob_2}
    \psi_{i,t+1} \eqdef \left\{\begin{array}{lll}
    1 & \text{ with probability } p_i(t) =  f(\mu_i(t), \gamma_i)  \\
    0 & \text{ with probability } 1- f(\mu_i(t), \gamma_i) 
    \end{array}\right.\,.
\end{align}
Function $f(\mu\, \gamma)$ is increasing in $\mu$ and thus when more of agent $i$'s neighbors declare $1$, this increases the probability of agent $i$ declaring $1$. The bias parameters is a multiplicative factor which scales the number of $1$ observations.

Note that the bias parameter $\gamma_i$ is always defined as agent $i$'s bias towards opinion $1$. However, the model is symmetric in the following way: a $\gamma$ bias towards $1$ is equivalent to a $1/\gamma$ bias towards $0$, which is captured by the equation
\begin{align}
\label{eq::reciprocal_for_f}
    f(\mu_i^1(t), \gamma) = 1 - f(\mu_i^0(t), 1/\gamma)\,.
\end{align}

We also define a normalized value that summarizes agent $i$'s declarations. 
Let $b_i^0, b_i^1 >0$ (the initialization) be such that
$b_i^0 + b_i^1 = 1$ for each $i$ and
\begin{align}
    m^0_i &= \sum_{j = 1}^n a_{i,j} b_{j}^0
    ~\text{ and }~ m^1_i = \sum_{j=1}^n  
 a_{i,j}  b_{j}^1\,.
\end{align}
For $t \in \bbZ_{+}$, let
\begin{align}
    \beta_{i}^0(t) &= \frac{b_{i}^0}{t} + \frac{1}{t}\sum_{\tau = {2}}^t (1-\psi_{i,\tau}) , \,\,\,
    \beta_{i}^1(t) 
    =\frac{b_{i}^1}{t} +\frac{1}{t}\sum_{\tau = {2}}^t \psi_{i,\tau}\,.
\end{align}

These are counts and proportions of declarations of each opinion (or ``time-averaged declarations'') for each agent (plus initial conditions). 
We similarly use  
$\beta_{i}(t) \eqdef \beta_{i}^1(t)$. It then follows that
\begin{align}
    \mu_i(t) =  \frac{1}{\degree(i)}\sum_{j = 1}^n a_{i,j}\beta_j(t)\,.
\end{align}

Finally, we let the corresponding vectors at time $t$ be
\begin{align}
    \bmu(t) &\eqdef \left[\mu_1(t),...\mu_n(t)\right]^{\ltop}
    \text{, }
    \bbeta(t) \eqdef \left[\beta_1(t),...\beta_n(t)\right]^{\ltop}\,
    \\
    &  \text{and } \bpsi(t) \eqdef [\psi_{1, t}, \dots, \psi_{n, t}]^{\ltop}
\end{align}

Define the diagonal matrix with $\bgamma$ along the diagonal as
\begin{align}
\bGamma = \diag(\bgamma)\label{eq::diag_gamma_matrix}\,.
\end{align}
We assume for this work that $\bGamma \neq \bI$. This parallels the assumption $\tilde \gamma_i > 1$ used in \cite{socialPressure2021}. 


\revise{Let $\cH_t$ be the \emph{history} of the process, consisting of all $\psi_{i,\tau}$ for $\tau \leq t$, and thus contains all information declared up to and including time $t$. The same notation, $\cH_0, \cH_1, \dots$ will be used as a filtration for the stochastic process generated by these dynamics, where $\cH_t$ is the $\sigma$-algebra for all events that happen by the time $t$.}

%
The recursive equation that governs the count of declared opinions by agent $i$ is
\begin{align}
    \beta_{i}(t+1) 
    = \frac{t}{t+1} \beta_{i}(t) + \frac{1}{t+1} \psi_{i, t+1}.\label{eq::evolution_stochastic}
\end{align}

\ifeight 
Conditioned on this history and combining all agents together, this gives the expectation

\else 
Conditioned on this history, we have expectation
\begin{align}
    \bbE&[ \beta_{i}(t+1)|\cH_{t}] 
    = \frac{t}{t+1} \beta_{i}(t) + \frac{1}{t+1}f(\mu_i(t), \gamma_i)\label{eq::evolution_scaler}\,.
\end{align}

We then put the dynamics in \eqref{eq::evolution_scaler} together for all the agents in the network, to get 
\begin{align}
     \bbE[ \bbeta(t+1)|\cH_{t}] &= \frac{t}{t+1} \bbeta(t) +
     \frac{1}{t+1}
     \begin{bmatrix}
     f(\mu_1(t), \gamma_1) \\
     f(\mu_2(t), \gamma_2) \\
     \vdots\\
     f(\mu_n(t), \gamma_n)
     \end{bmatrix}
     \\ &= \frac{t}{t+1} \bbeta(t) + 
     \frac{1}{t+1} f (\bmu(t), \bgamma),
     \label{eq::fulldynamics_matrix}
\end{align}
or alternatively
\fi 
\begin{align}
     \bbE[ \bbeta(t+1)-\bbeta(t)|\cH_{t}] &= \frac{1}{t+1}(f (\bmu(t), \bgamma)- \bbeta(t)).
\end{align}

\begin{definition}\label{def::det_expected_dynamics}
The \emph{deterministic expected dynamics} are
\begin{align}
    \bbeta(t+1)-\bbeta(t) &= \frac{1}{t+1}(F(\bbeta(t), \bgamma)-\bbeta(t))
\end{align}
where $F(\bbeta(t), \bgamma) = [F_1(\bbeta(t), \bgamma),\dots,F_n(\bbeta(t), \bgamma)]$ and
\begin{align}
    F_i(\bbeta(t), \bgamma) = f\left(\frac{1}{\degree(i)}\sum_{j =1}^n a_{i,j} \beta_j(t), \gamma_i\right)\,.
\end{align}
\end{definition}
We refer to original dynamics governed by \eqref{eq::declared_opinion_prob_2} and \eqref{eq::evolution_stochastic} as
the \emph{full stochastic dynamics}.





\ifeight 
\else 
\subsection{Intuition for Interacting P\'{o}lya Urn Model}

In this section, we consider how the interacting  P\'{o}lya urn model is meaningful for opinion dynamics with social pressure. (For this, we use the case when $a_{i,j}$ is either $0$ or $1$.) Typically, urn models start with some composition of balls of different colors in an urn. At each step, a ball is drawn (independent of previous draws given the urn composition) from the urn and additional balls are added based on the drawn ball according to some urn functions. In the interacting P\'{o}lya urn model, when a neighbor of agent $i$ declares an opinion, this is modeled as agent $i$ putting a corresponding ball (labeled $0$ or $1$) into her own urn. 

Then, when agent $i$ declares an opinion, it is modeled by the following: she draws a ball from her urn and declares the corresponding opinion; each ball corresponding with opinion $0$ is $\gamma_i$ times as likely to be drawn as one with opinion $1$ (so $\gamma_i > 1$ indicates a bias towards opinion $0$ and $\gamma_i < 1$ indicates a bias towards opinion $1$). Note that if $\gamma_i = 1$ then agent $i$ is simply (stochastically) mimicking the opinions her neighbors have declared in the past (plus her initial state, which becomes asymptotically negligible). We remark that the bias parameter is similar to the initial opinions in the Friedkin-Johnsen model \cite{friedkin1990} since they both are fixed parameters that influence all steps; 
however, note that there is a significant difference as the bias parameter can be overwhelmed over time by social pressure, thus leading to consensus.

\fi 

 \section{Convergence Analysis}

\revise{ 
One of the primary contributions of the present work is to show the behavior of the time-averaged declared opinions $\bbeta(t)$ of the interacting P\'{o}lya urn model under the stochastic dynamics of  \eqref{eq::declared_opinion_prob_2} and \eqref{eq::evolution_stochastic} in arbitrary networks. Since $\bbeta(t)$ approaches equilibrium points, we first define equilibrium points and proceed to developing tools for this result.}

\subsection{Equilibria of the Expected Dynamics}
\begin{definition}
A vector $\bbeta$ is an \emph{equilibrium point} of the expected dynamics if
$
    F(\bbeta, \bgamma) = \bbeta\,.
$
\end{definition}

While this is defined in terms of $F(\bbeta, \bgamma)$ which is for the deterministic expected dynamics, the same equilibrium points are equilibrium points for the full stochastic dynamics. 

Note that vector $\bone$ and vector $\bzero$ are always equilibrium points. \revise{We discuss this further in \Cref{sec::converge_to_consensus}.}

Finding an exact analytic expression for equilibrium points for a given network with given bias parameters is unfortunately difficult in general. 
\ifeight
\else
In \Cref{sec::community_network}, we show how to find equilibrium points for the simplified community network, which is possible because it is a small example. 
\fi
We present a set of equations which can be used numerically to solve for equilibrium points. 

\begin{proposition}\label{prop::equilibrium_beta}
The equilibrium points of the expected dynamics 
are given by the solutions to the equations
\begin{align}\label{eq::equilibrium_beta}
    0 = (\gamma_i - 1)\beta_i\mu_i 
    + \beta_i - \gamma_i \mu_i  
\end{align}
where $i \in \{1,\ldots, n\}$ and $\mu_i = \frac{1}{\degree(i)}\sum_{j=1}^n a_{i,j}\beta_j$.

\end{proposition}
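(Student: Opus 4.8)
The plan is to unwind the definition of an equilibrium point and substitute the explicit form of $f$, since the claimed equations \eqref{eq::equilibrium_beta} are nothing more than a cleared-denominator rewriting of the fixed-point condition $F(\bbeta, \bgamma) = \bbeta$. First I would recall from \Cref{def::det_expected_dynamics} that $F_i(\bbeta, \bgamma) = f(\mu_i, \gamma_i)$ where $\mu_i = \frac{1}{\degree(i)}\sum_{j=1}^n a_{i,j}\beta_j$, so that $\bbeta$ is an equilibrium point if and only if, for every $i$,
\begin{align}
    \beta_i = f(\mu_i, \gamma_i) = \frac{\gamma_i \mu_i}{1 + (\gamma_i - 1)\mu_i}\,,
\end{align}
using the definition \eqref{eq::f_def} of $f$.

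The second step is to clear the denominator. Multiplying both sides by $1 + (\gamma_i - 1)\mu_i$ gives $\beta_i\big(1 + (\gamma_i - 1)\mu_i\big) = \gamma_i \mu_i$, and expanding and moving everything to one side yields exactly
\begin{align}
    0 = (\gamma_i - 1)\beta_i \mu_i + \beta_i - \gamma_i \mu_i\,,
\end{align}
which is \eqref{eq::equilibrium_beta}. Since each step is reversible, this establishes the desired equivalence between equilibrium points and solutions of the stated system.

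The one point that needs care — and the only place where there is anything to check beyond bookkeeping — is that clearing the denominator is a genuine equivalence rather than a one-way implication; this requires that the denominator $1 + (\gamma_i - 1)\mu_i$ never vanish on the relevant domain. Here I would observe that $\gamma_i > 0$ and that $\mu_i \in [0,1]$ as a convex combination of the $\beta_j \in [0,1]$, so that $1 + (\gamma_i - 1)\mu_i = (1 - \mu_i) + \gamma_i \mu_i > 0$ strictly. Hence the denominator is positive throughout, multiplying through is invertible, and the two formulations coincide. I expect this positivity observation to be the main (though still minor) obstacle, as the remainder is a direct algebraic rearrangement of the fixed-point equation.
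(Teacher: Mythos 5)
Your proposal is correct and matches the paper's proof, which likewise just observes that an equilibrium satisfies $\beta_i = f(\mu_i,\gamma_i) = \frac{\gamma_i\mu_i}{1+(\gamma_i-1)\mu_i}$ and rearranges. Your extra remark that $1+(\gamma_i-1)\mu_i = (1-\mu_i)+\gamma_i\mu_i > 0$ on the domain, so that clearing the denominator is a genuine equivalence, is a small but worthwhile refinement the paper leaves implicit.
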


\begin{proof}
This follows from the fact that at any equilibrium,
\begin{align}
    \beta_i &= f(\mu_i,\gamma_i) = \frac{\gamma_i \mu_i }{1 + (\gamma_i - 1) \mu_i}.
\end{align}
\end{proof}





\subsection{Tools from Stochastic Approximation}

\revise{
In order to prove that the full stochastic dynamics approaches the equilibrium points of the expected dynamics, we use results based on the long-term behavior of path-dependent stochastic processes presented in \cite[Theorem 3.1]{arthur1986}, where stochastic approximation is used to show convergence of general  P\'{o}lya urn models.  
}

\revise{
Recall that the full stochastic dynamics of our model (captured by \eqref{eq::evolution_stochastic} and \eqref{eq::declared_opinion_prob_2}) is (for each $i$):
\begin{align}
    \beta_{i}(t+1) 
    &= \beta_{i}(t) + \frac{1}{t+1} \big(\psi_{i, t+1} - \beta_{i}(t)\big) \label{eq::urn_dynamics_full_psi}\,.
\end{align}
General P\'{o}lya urn models follow the above equation as well, where the random draws $\psi_{i, t+1}$ are governed by \emph{urn functions} 
${\boldf} = (f_{1}, \dots, f_{n})$
where
\begin{align}
\label{eq::equation_general_polya_urn1}
    \psi_{i, t} = \begin{cases}
    1 & \text{ with probability }f_{i}(\bbeta(t)) \\
    0 & \text{ with probability }1 - f_{i}(\bbeta(t)) 
    \end{cases}
\end{align}
for all $i=1,\ldots,n$.
In our work, we require that  $\bbeta(t) \in [0,1]^n$ and our urn functions 
are given in \eqref{eq::declared_opinion_prob_2}.
For any general set of urn functions, we can define (similarly to \Cref{def::det_expected_dynamics}) the corresponding deterministic expected dynamics as (for each $i$)
\begin{align}
\beta_i(t+1) - \beta_i(t) = \frac{1}{t+1} (f_i(\bbeta(t)) - \beta_i(t))\,.
\end{align}
}

\revise{
Let $\cS$ be a subset of $[0, 1]^n$ (and we also allow that $\cS = [0, 1]^n$). Let the set 
\begin{align}
    \label{eq::general_eq_points}
    B_{\boldf} = \{ \bbeta: \boldf(\bbeta) = \bbeta, \bbeta \in \cS\}
\end{align}
denote the set of equilibrium points of the deterministic expected dynamics with urn function $\boldf$. 
Then, we can show that the dynamics of general urn functions gets arbitrarily close to the equilibrium points of the expected dynamics. For this theorem, we need the following notation: For any set $A$, let $\dist(\bbeta, A) = \inf_{\ba \in A} \norm{\bbeta - \ba}$. 
}

\revisered{
\begin{theorem}[Adapted from Theorem 3.1 from \cite{arthur1986}]\label{thm::arthur1986}
Let $\cS$ be a compact and convex subset of $[0, 1]^n$ and $\boldf = (f_1, \dots, f_n)$ be continuous urn functions $\boldf: \cS \to \cS$ governing 
the general P\'{o}lya urn model whose dynamics $\bbeta(t) \in \cS$ evolve according to
\eqref{eq::urn_dynamics_full_psi}
and \eqref{eq::equation_general_polya_urn1}. 
If there exists a Lyapunov function $V: \cS \to \bbR$ where  \label{part::V_arthur1986}
    \begin{enumerate}[(i)]
    \item $V$ is twice \reviseblue{continuously} differentiable
    \item $V(\bbeta) \geq 0$ for $\bbeta \in \cS$
    \item \label{part::V_iii_arthur1986}
    $ \langle \boldf(\bbeta) - \bbeta, \nabla V(\bbeta)\rangle < 0$ for $\bbeta \in \cS \setminus B_{\boldf}$ and equal to $0$ for $\bbeta \in B_{\boldf}$ 
    \end{enumerate}
then $\lim_{t \to \infty} \dist(\bbeta(t), B_{\boldf}) = 0$ almost surely.
\end{theorem}
}
\revise{
In other words, if the conditions are met, the stochastic dynamics of the general urn model approaches the set of equilibrium point of the deterministic expected dynamics. If the set of equilibrium points consist only of distinct points, then the above theorem shows that the dynamics must converge to an equilibrium point of the expected dynamics. }

\ifshowproof 
\else 
\revise{
\begin{proof}[Proof Sketch]
Our procedure for showing our result is to approximate the step-wise change in Lyapunov function $V$ by its linear approximation using $\nabla V$ and the value at the \emph{expected step}, that is, we want to show that
\begin{align}
\label{eq::linear_approximation_expression}
V (\bbeta(t+1)) - V (\bbeta(t)) \approx
\langle  \nabla V(\bbeta(t)), \boldf(\bbeta(t)) - \bbeta(t) \rangle\,.
\end{align}
Then, we argue that if the dynamics $\bbeta(t)$ does not get arbitrarily close to the set $B_{\boldf}$, we will get that
\begin{align}
\lim_{t \to \infty}  \sum_{\tau > t_0}^{t} \langle  \nabla V(\bbeta(\tau)), \boldf(\bbeta(t)) - \bbeta(t)) \rangle = - \infty
\end{align}
which implies also that
\begin{align}
\lim_{t \to \infty}  \sum_{\tau > t_0}^{t} V (\bbeta(\tau+1)) - V (\bbeta(\tau)) = - \infty\,.
\end{align}
However, this is a contradiction, since the $V > 0$ and $V(\bbeta(0))$ starts at a finite value, the total change in the Lyapunov function over the dynamics cannot be infinitely negative. For the fine details, we need the lemma:
\begin{lemma} 
    If $Y_1, Y_2, \dots$ is a martingale with $|Y_1| < c$ and step sizes decreasing according to $|Y_t - Y_{t-1}| \leq c/t$ for some constant $c > 0$, then $\lim_{t \to \infty} Y_t$ exists and is finite almost surely.
\end{lemma}
\end{proof}
}
\fi 

\ifshowproof 
\reviseblue{We remark that the condition where $V$ needs to be \emph{continuously} twice differentiable is not strictly necessary. The reason behind this is that in our proof we need the Hessian to always be bounded. Since we are also working on a compact set, continuous Hessians imply that the Hessian is bounded. We could have instead stated the theorem as $V$ having bounded Hessian.}

\revise{\Cref{thm::arthur1986} states that the key quantity needed to be shown is the condition on the inner product in \ref{part::V_iii_arthur1986}, that the direction of movement of any one step of the dynamics to its expected value is always in a direction that decreases the Lyapunov function.}

\revise{
We include the proof of \Cref{thm::arthur1986} for completeness. First, we include the following lemma about concerning martingales.
}

\revisered{
\begin{lemma} \label{lem:martingale-decreasing-steps}
    If $Y_1, Y_2, \dots$ is a martingale with $|Y_1| < c$ and step sizes decreasing according to $|Y_t - Y_{t-1}| \leq c/t$ for some constant $c > 0$, then $\lim_{t \to \infty} Y_t$ exists and is finite almost surely.
\end{lemma}
}

\begin{proof}
We want to use Doob's Martingale Convergence theorem, which requires showing that $\sup_t \bbE[|Y_{t}|]$ is finite. We will bound this quantity using the variance of $Y_{t}$.
Since
\begin{align}
Y_{t-1} - c/t \leq Y_t \leq Y_{t-1} + c/t
\end{align}
this implies that
\begin{align}
\var[Y_t | Y_{t-1}] \leq \left(\frac{c}{t} \right)^2
\end{align}
and we also get that $\var[Y_1] < c^2$.

For any $t >0$, we apply the law of total variance and telescope the resulting sum to get
\begin{align}
\var[Y_{t}] &= \bbE[\var[Y_{t}|Y_{t-1}]] + \var[\bbE[Y_{t}|Y_{t-1}]]
\\ &\leq \left(\frac{c}{t}\right)^2 + \var[Y_{t-1}]
\\ \implies & \var[Y_{t}] \leq \var[Y_1] +\sum_{\tau=2}^t \left(\frac{c}{\tau}\right)^2  \leq \sum_{\tau=1}^t \left(\frac{c}{\tau}\right)^2 \,.
\end{align}
 But since and $\bbE[|Y_t|] \leq \bbE[Y_t^2]^{1/2}$ (using Cauchy-Schwarz inequality) and $\bbE[Y_t] = 0$ (WLOG we can let $Y_1 = 0$),
\begin{align}
\sup_{t > 0} \bbE[|Y_{t}|] &\leq \sup_{t > 0} \bbE[Y_{t}^2]^{1/2}
\\ &\leq \sup_{t > 0} (\var(Y_t) + \bbE[Y_t]^2)^{1/2}
\\ & \leq \left(\sum_{\tau=1}^\infty \left(\frac{c}{\tau}\right)^2 \right)^{1/2} < \infty
\end{align}
Thus, we can apply Doob's Martingale Convergence Theorem to show that $Y_{t}$ almost surely converges to a finite limit.
\end{proof}

\begin{proof}[Proof of \Cref{thm::arthur1986}] For notation let $s$ be the diameter of  set $\cS$.  Furthermore, note that $\langle \boldf(\bbeta) - \bbeta, \nabla V(\bbeta) \rangle$ is continuous since $\boldf(\bbeta)$ is continuous and $\nabla V(\bbeta)$ is differentiable (since $V$ is twice differentiable) and thus also continuous.


For notation let:
\begin{align}
\Delta(\bbeta(t)) \eqdef \bpsi(t+1) - \bbeta(t)
\end{align}
and for any $\bbeta \in \cS$,
\begin{align}
g(\bbeta) = \bbE[\Delta(\bbeta)] = \boldf(\bbeta) - \bbeta
\end{align}

Our procedure for showing our result is to approximate the step-wise change in Lyapunov function $V$ by its linear approximation using $\nabla V$ and the value at the \emph{expected step}, that is, we want to show that
\begin{align}
\label{eq::linear_approximation_expression}
V (\bbeta(t+1)) - V (\bbeta(t)) \approx
\langle  \nabla V(\bbeta(t)), g(\bbeta(t)) \rangle
\end{align}
Then, we argue that if the dynamics $\bbeta(t)$ does not get arbitrarily close to the set $B_{\boldf}$, we will get that
\begin{align}
\lim_{t \to \infty}  \sum_{\tau > t_0}^{t} \langle  \nabla V(\bbeta(\tau)), g(\bbeta(\tau)) \rangle = - \infty
\end{align}
which implies also that
\begin{align}
\lim_{t \to \infty}  \sum_{\tau > t_0}^{t} V (\bbeta(\tau+1)) - V (\bbeta(\tau)) = - \infty
\end{align}
However, this is a contradiction, since the $V > 0$ and $V(\bbeta(0))$ starts a finite value, the total change in the Lyapunov function over the dynamics cannot be infinitely negative. 

To make this argument precise, first, there are two types of errors which occur from approximation \eqref{eq::linear_approximation_expression} which we need to show are small. The first is the error of using the linear approximation. The second is the error of using the expected step $g(\bbeta)$ in place of $\Delta(\bbeta(t))$.
We start with the first error. 

Define the error of of using the linear approximation as
\begin{align}
\error(\bbeta,\bz) \eqdef V(\bz) - (V(\bbeta) + \langle \nabla V(\bbeta), \bz - \bbeta \rangle)\,.
\end{align}
Since we are working on a compact set $\cS$ where $V$ is twice continuously differentiable, this means that $|\nabla^2 V(\bbeta)| < h $ for some constant $h > 0$ which holds for all $\bbeta$. Using \cite[Lemma 1.2.2 and Lemma 1.2.3]{nesterov2018lectures}
this means that
\begin{align}
 V(\bz) - (V(\bbeta) + \langle \nabla V(\bbeta), \bz - \bbeta \rangle) \leq \frac{h}{2} \norm{\bz - \bbeta}^2
\end{align}
and thus for $c_1 = h/2$ we have
\begin{align}
\label{eq::err_bound_square_norm}
|\error(\bbeta,\bz)| \leq c_1 \norm{\bz - \bbeta}^2 \,.
\end{align}

Since $\bbeta(\tau + 1) - \bbeta(\tau) = \frac{1}{\tau + 1}\Delta(\bbeta(t))$, using \eqref{eq::err_bound_square_norm},
\begin{align}
&\left|\sum_{\tau = t_0}^{t-1} \error\left(\bbeta(\tau),\bbeta(\tau+1)\right) \right|
\\&\leq \left|\sum_{\tau = t_0}^{t-1} c_1 \frac{\norm{\Delta(\bbeta(t))}^2}{(\tau + 1)^2} \right|
\\& \leq  c_1 \sum_{\tau = t_0}^{t-1}\frac{s^2}{(\tau + 1)^2} 
\\ &=  K_1< \infty  
\end{align}
for some constant $K_1 > 0$.

For the second error, we want to show that the perturbations from the expectation converge. Let 
\begin{align}
\boldeta(t) 
\eqdef \Delta(\bbeta(t)) - \bbE[\Delta(\bbeta(t))] 
\end{align}
and define
\begin{align}
y(t) = \sum_{\tau=1}^t \frac{1}{\tau} 
\langle \nabla V(\bbeta(\tau-1)), \boldeta(\tau) \rangle\,.
\end{align}
Since  $\nabla V$ has bounded magnitude (as it is continuous on the compact set $\cS$ since $V$ is continuously differentiable) and 
\begin{align}
\bbE[\boldeta(\tau)|\bbeta(\tau-1)] = \bzero
\end{align}
we get that 
    \begin{align}
        \bbE&[y(t) - y(t-1) \,|\, \bbeta(t-1)] 
        \\&= \frac{1}{t} \langle\nabla V(\bbeta(t-1)), \bbE[\boldeta(t) \,|\, \bbeta(t-1)] \rangle = \bzero
    \end{align}
    and hence $y(t)$ is a martingale. Furthermore, since $|\langle \nabla V(\bbeta(t-1)), \boldeta(t)\rangle |$ is bounded, the martingale $y(1), y(2), \dots$ has step sizes of $c_2 / t$ for some constant $c_2$. This means that by \Cref{lem:martingale-decreasing-steps}, it converges to a (finite) limit almost surely, which additionally implies that $\sup_t |y(t)| = K_2$ for some constant $0 < K_2 < \infty$ almost surely as well.

Next, we express the difference of the value of $V$ under our dynamics. Let $t$ be some time larger than time $t_0$. Then
\begin{align}
V &(\bbeta(t)) - V(\bbeta(t_0)) = \sum_{\tau = t_0} ^{t-1} V(\bbeta(\tau + 1)) - V(\bbeta(\tau ))
\\ & =  \sum_{\tau = t_0} ^{t-1} \left\langle  \nabla V(\bbeta(\tau)), \frac{1}{\tau + 1} \Delta(\bbeta(\tau)) \right\rangle  
\eqlinebreakshort
+ \error\left(\bbeta(\tau), \bbeta(\tau+1)\right) 
\\ & =  \sum_{\tau = t_0} ^{t-1} \left\langle  \nabla V(\bbeta(\tau)), \frac{1}{\tau + 1} g(\bbeta(\tau) ) \right\rangle  
\eqlinebreakshort
+\left\langle  \nabla V(\bbeta(\tau)), \frac{1}{\tau + 1} \boldeta(\bbeta(t)) \right\rangle  
\eqlinebreakshort
+ \error\left(\bbeta(\tau), \bbeta(\tau+1)\right) 
\end{align}

Since $V > 0$, it must be that $V (\bbeta(t)) - V(\bbeta(t_0)) $ is finite which gives
\begin{align}
\sum_{\tau = t_0} ^{t-1} &\left\langle  \nabla V(\bbeta(\tau)), \frac{1}{\tau + 1} g(\bbeta(\tau)) \right\rangle
\\&> V (\bbeta(t)) - V(\bbeta(t_0)) - K_1 - K_2
\end{align}
and therefore
\begin{align}
\label{eq:finite-drift}
\lim_{t \to \infty}\sum_{\tau = t_0} ^{t-1}  \frac{1}{\tau + 1} \left\langle  \nabla V(\bbeta(\tau)), g(\bbeta(\tau)) \right\rangle  > - \infty\,.
\end{align}

We will use the above equation to derive a contradiction. Assume that $\lim_{t \to \infty} \dist(\bbeta(t), B_{\boldf}) \neq 0$ (or does not converge to a limit at all). 
Let $U(X, \epsilon)$ be the $\epsilon$ open neighborhood of set $X$.
Select $\epsilon > 0$. Let $C(\epsilon) = \cS \setminus  U(B_{\boldf}, \epsilon)$, which are all points bounded away from $B_{\boldf}$ by $\epsilon$; note that $C(\epsilon)$ is also compact.

The statement $\lim_{t \to \infty} \dist(\bbeta(t), B_{\boldf}) \neq 0$ is equivalent to the existence of $\epsilon > 0$ such that
    \begin{align}
        \bbeta(t) \in C(\epsilon) \text{ for infinitely many } t\,.
    \end{align}

    Since $C(\epsilon)$ is compact, if we take a subsequence $\bbeta(\tilde t_0), \bbeta(\tilde t_1), \dots$ of $\{\bbeta(t)\}$ such that each $\bbeta(\tilde t_i) \in C(\epsilon)$, this subsequence must have a limit point $\bw$ in $C(\epsilon)$.
    We note that since $\bzero$ and $\bone$ are always equilibrium points, $\bw$ needs be to bounded away from $\bzero$ and $\bone$.
    
There is a sufficiently small $\delta > 0$ such that $U(\bw, \delta) \subset C(\epsilon)$. First, we show that the dynamics $\bbeta(t)$ must leave $U(\bw, \delta)$ in finite time.

Let
\begin{align}
m(\epsilon) = \min_{\bbeta \in C(\epsilon)}  \big(-\langle \boldf(\bbeta) - \bbeta, \nabla V(\bbeta) \rangle\big)
\end{align}
which exists since 
$C(\epsilon)$ is compact. For any $\epsilon > 0$, we know that $m(\epsilon) > 0$ as well, since $\langle \boldf(\bbeta) - \bbeta, \nabla V(\bbeta) \rangle < 0$ for all $\bbeta \not \in B_{\boldf}$ (and is continuous), and $C(\epsilon)$ is compact; thus $-\langle \boldf(\bbeta) - \bbeta, \nabla V(\bbeta) \rangle$ achieves a minimum on $C(\epsilon)$ (rather than just an infimum) which must be positive.

Let $
t^* = 2 c_1 s /m(\epsilon) 
$ (where $c_1$ is from \eqref{eq::err_bound_square_norm})
so that for $t > t^*$
\begin{align}
\label{eq::bound_tstar}
\frac{1}{(t+1)} < \frac{m(\epsilon) }{2c_1s}
\end{align}

Suppose that for $t_0  > t^*$, the dynamics is in $C(\epsilon)$. If the dynamics never leaves $C(\epsilon)$, then
\begin{align}
\lim_{t \to \infty} \sum_{\tau = t_0} ^{t-1} & \frac{1}{\tau + 1} \left\langle  \nabla V(\bbeta(\tau)),  \Delta(\bbeta(\tau)) \right\rangle
\\& \leq \lim_{t \to \infty} \sum_{\tau = t_0} ^{t-1}  \frac{1}{\tau + 1} (- m(\epsilon))
\\ & = - \infty
\end{align}
which contradicts \eqref{eq:finite-drift}. (This fact can also be proven using \cite[Theorem 5.1]{nevel1976stochastic}).
Thus, the trajectory must exit $C(\epsilon)$ in finite time every time it enters.

Next, we use a contradiction to show that the trajectory cannot exit and re-enter $C(\epsilon)$ infinitely often. Since $\bbeta(t)$ leaves $U(\bw, \delta)$ in finite time, it must be that $\bbeta(t)$ enters and exists $U(\bw, \delta)$ infinitely often in order for $\bw$ to be a limit point of a subsequence of $\bbeta (\tilde t_0), \bbeta (\tilde t_1), \dots$.

Now consider $U(\bw, \delta/2)$; since $\bw$ is the limit of subsequence $\bbeta(\tilde t_0), \bbeta(\tilde t_1), \dots$ we know that $\bbeta(t) \in U(\bw, \delta/2)$ for infinitely many $t$. We then define a \emph{passing-through sequence} as a sequence of times $t_k, t_k+1, \dots, t_k+s_k$ such that
    \begin{itemize}
        \item $\bbeta(t_k) \in U(\bw, \delta/2)$;
        \item $\bbeta(t_k+\tau) \in U(\bw, \delta)$ for all $1 \leq \tau \leq s_k$;
        \item $\bbeta(t_k + s_k + 1) \not \in U(\bw, \delta)$.
    \end{itemize}
    
    In other words, this is a sequence of steps of the process that starts when the process is within $\delta/2$ of $\bw$ and ends only when the process goes further than $\delta$ of $\bw$. Since $\bbeta(t)$ visits $U(\bw, \delta/2)$ infinitely many times and leaves $U(\bw,\delta)$ infinitely many times, there must be infinitely many (disjoint) passing-through sequences. We now consider a given passing-through sequence: we know that for any $1 \leq \tau \leq s_k+1$,
    \begin{align}
        \norm{&\bbeta(t_k + \tau) - \bbeta(t_k + \tau - 1)} 
        \\&= \bbnorm{\frac{1}{t_k + \tau} \bpsi(t_k+\tau) - \frac{1}{t_k+\tau}\bbeta(t_k+\tau)}
        \\ &\leq c'(\bw, \epsilon)\left(\frac{1}{t_k + \tau}\right)
    \end{align}
    where $c'(\bw, \epsilon)> 0$ since $\bbeta(t_k + \tau)$ (by virtue of being in $C(\epsilon)$) must be bounded away from $\bzero$ and $\bone$ and $\bpsi(t_k + \tau)$ is a vector of zeros or ones. This equivalently implies that
    \begin{align}
   - \left(\frac{1}{t_k + \tau}\right) \leq - \frac{1}{c'(\bw, \epsilon)} \norm{\bbeta(t_k + \tau) - \bbeta(t_k + \tau - 1)}
    \end{align}
    Meanwhile, we know that since $\bbeta(t_k + \tau-1) \in U(\bw,\delta) \subseteq C(\epsilon)$,
    \begin{align}
        \frac{1}{t_k+\tau}  &\langle \nabla V(\bbeta(t_k+\tau-1)), g(\bbeta(t_k+\tau-1))\rangle 
        \\&< -\frac{1}{t_k + \tau} m(\epsilon) 
        \\ & < - \frac{m(\epsilon)}{c'(\bw, \epsilon)} \norm{\bbeta(t_k + \tau) - \bbeta(t_k + \tau - 1)}
        \\ & = -c \norm{\bbeta(t_k + \tau) - \bbeta(t_k + \tau - 1)}
    \end{align}
    which implies that if $c =   \frac{m(\epsilon)}{c'(\bw, \epsilon)} $, for any $t_k + \tau$ in a passing-through sequence we have
    \begin{align}
    \sum_{\tau=1}^{s_k+1} &\frac{1}{t_k+\tau}   \langle \nabla V(\bbeta(t_k+\tau-1)), g(\bbeta(t_k+\tau-1))\rangle  \\&< -c \sum_{\tau=1}^{s_k+1} \norm{\bbeta(t_k + \tau) - \bbeta(t_k + \tau - 1)} \label{eq:passing-through-bound}
    \end{align} 
    But because a passing-through sequence must start within $U(\bw, \delta/2)$ and end outside $U(\bw, \delta)$,
    \begin{align}
        \norm{\bbeta(t_k+s_k+1) - \bbeta(t_k)} \geq \delta/2
    \end{align}
    which means by the triangle inequality that
    \begin{align}
        \sum_{\tau=1}^{s_k+1} &\norm{\bbeta(t_k + \tau) - \bbeta(t_k + \tau - 1)} 
        \\ &\geq \norm{\bbeta(t_k+s_k+1) - \bbeta(t_k)} \geq \delta/2
    \end{align}
    and hence plugging into equation \eqref{eq:passing-through-bound} yields
    \begin{align}
        \sum_{\tau=1}^{s_k+1} & \frac{1}{t_k+\tau}   \langle \nabla V(\bbeta(t_k+\tau-1)), g(\bbeta(t_k+\tau-1))\rangle  
        \\ & \leq -c \delta/2
    \end{align}
    Since there are infinitely many passing-through sequences, and all the terms of the sum $\sum_{t=1}^\infty \frac{1}{t}  \nabla V(\bbeta(t-1))^\ltop g(\bbeta(t-1))$ are nonpositive we can finally conclude that
    \begin{align}
        \sum_{t=1}^\infty &\frac{1}{t}  \langle \nabla V(\bbeta(t-1)),  g(\bbeta(t-1)) \rangle \\&\leq \sum_{k=1}^\infty \sum_{\tau=1}^{s_k+1} \frac{1}{t_k+\tau}  \nabla V(\bbeta(t_k+\tau-1))^\ltop g(\bbeta(t_k+\tau-1)
        \\ &\leq \sum_{k=1}^\infty (-c\delta/2)
        \\ &= -\infty
    \end{align}
    which contradicts equation \eqref{eq:finite-drift}.

    Thus, we have a contradiction and the proof is complete.

\end{proof}

\fi 

\ifeight 
\else 
The next two theorems are about stable and unstable points of the urn process. Given a point $\btheta$, we say that $\btheta$ is a \emph{stable point} if there exists a symmetric positive-definite matrix $\bC$ and a neighborhood $U$ of $\btheta$ such that 
\begin{align}
\label{eq::stable_fixed_point}
    \langle \bC(\bz - \boldf(\bz)), \bz- \btheta \rangle > 0
\end{align}
for $\bz \neq \btheta$ and $\bz \in U \cap \cS$.
A point $\btheta$ is unstable if there exists a symmetric positive-definite matrix $\bC$ and a neighborhood $U$ of $\btheta$ such that
\begin{align}
\label{eq::unstable_fixed_point}
    \langle \bC(\bz - \boldf(\bz)), \bz- \btheta \rangle < 0
\end{align}
for $\bz \neq \btheta$ and $\bz \in U \cap \cS$.

\begin{theorem}[Theorem 5.1 from \cite{arthur1986}]
\label{thm::local_stability_arthur}
    Let $\btheta$ be a stable point in the interior of $\cS$. Given a process with transition functions $\{ q_n\}$ which map the interior of $\cS$ into itself, and which converge in the sense that
    \begin{align}
        \sup_{\bz \in U \subset \cS} \| q_n(\bz) - q(\bz) \| \leq a_n
    \end{align}
    where $\sum_{n = 1}^ \infty a_n/n \leq \infty$,
    we then have
    \begin{align}
        \bbP[\bZ_n \to \btheta] > 0\,.
    \end{align}
\end{theorem}

There is also a theorem for determining unstable interior equilibrium points.

\begin{theorem}
[[Theorem 5.2 from \cite{arthur1986}] 
\label{thm::not_stable_interior_arthur}
For an interior unstable point $\btheta$
such that for all $\by$ in a neighborhood $U$ of $\bz$, 
\begin{align}\label{item::holder}
\norm{F(\by, \bgamma) - F(\bz, \bgamma)} \leq c \norm{\by - \bz}^\alpha 
\end{align}
for some constant $c$ and some $\alpha \in (0, 1]$. 
Then
\begin{align}
\bbP[Z_n \to \btheta] = 0\,.
\end{align}
\end{theorem}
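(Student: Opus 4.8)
The plan is to show that the urn proportions cannot settle at the repelling fixed point $\btheta$ by tracking a quadratic energy that, near $\btheta$, drifts upward while being perpetually agitated by the sampling noise. First I would take the symmetric positive-definite matrix $\bC$ and neighborhood $U$ supplied by the instability condition \eqref{eq::unstable_fixed_point} and set $L(\bz) = (\bz - \btheta)^{\ltop} \bC (\bz - \btheta)$. Writing the recursion \eqref{eq::equation_general_polya_urn2} as $\bZ_{t+1} - \bZ_t = \frac{1}{\zeta+t}\big(q(\bZ_t) - \bZ_t\big) + \frac{1}{\zeta+t}\big(\omega_t(\bZ_t) - q(\bZ_t)\big)$, where the final bracket is a martingale difference given $\cF_t$, I would expand $L(\bZ_{t+1}) - L(\bZ_t)$ and take the conditional expectation. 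The leading term is $\frac{2}{\zeta+t}(\bZ_t - \btheta)^{\ltop}\bC\big(q(\bZ_t) - \bZ_t\big)$, which by \eqref{eq::unstable_fixed_point} is strictly positive on $U \setminus \{\btheta\}$, while the remaining terms are $O\big((\zeta+t)^{-2}\big)$. Thus, near $\btheta$ the energy $L(\bZ_t)$ carries a \emph{repelling} drift of order $1/t$.

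The second ingredient is non-degeneracy of the noise, and this is exactly where interiority of $\btheta$ is needed: since every coordinate of $\btheta$ lies strictly in $(0,1)$, the sampling increments $\omega_t(\bZ_t)$ stay genuinely random on a neighborhood of $\btheta$, so the conditional variance of $\omega_t(\bZ_t) - q(\bZ_t)$ is bounded below by a positive constant there. The Hölder hypothesis \eqref{item::holder} enters to tame the nonlinear remainder of the drift expansion: on a small enough $U$ it ensures that the positive linear term dominates the $\norm{\bZ_t - \btheta}^{1+\alpha}$ error, so the repelling-drift estimate holds uniformly on $U$.

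Finally, I would combine the repelling drift with the persistent, lower-bounded noise to rule out convergence. This is the crux, and where I expect the main difficulty, since a positive $O(1/t)$ drift does not by itself forbid convergence---the drift is itself proportional to the small quantity $L(\bZ_t)$, so a naive ``positive drift implies escape'' argument fails. The resolution I would pursue is a stopping-time / submartingale comparison, morally the mirror image of \Cref{prop::conditions_zero_prob}: conditioned on $\bZ_t$ entering a small ball around $\btheta$, one shows that the accumulated noise overcomes the accumulated drift enough to eject the process from the ball with probability bounded away from $0$, uniformly in the entry time. Chaining these escape events over the countably many potential visits to shrinking neighborhoods then forces $\bbP[\bZ_n \to \btheta] = 0$. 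The delicate part is the bookkeeping---selecting the ball radius and escape threshold and verifying that the noise lower bound survives the Hölder error---which is the technical heart of the argument.
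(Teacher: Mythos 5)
The paper does not actually prove this statement: it is quoted verbatim as Theorem 5.2 of \cite{arthur1986} and used as a black box (its only role is in the proof of \Cref{prop::pos_prob_interior}, where the authors merely verify its hypotheses). So there is no in-paper proof to compare against; the relevant benchmark is the argument in the cited source, which is the classical non-convergence-to-unstable-points argument later systematized by Pemantle. Your outline follows exactly that line: the quadratic energy $L(\bz)=(\bz-\btheta)^{\ltop}\bC(\bz-\btheta)$ built from the matrix $\bC$ in \eqref{eq::unstable_fixed_point}, the computation that its conditional drift is positive and of order $1/t$ on $U\setminus\{\btheta\}$, the use of interiority to get a uniform lower bound on the conditional variance of the sampling noise, and the H\"older condition to keep the nonlinear remainder from swamping the repelling term. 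You also correctly identify why the naive ``positive drift implies escape'' reasoning fails: the drift is proportional to $L(\bZ_t)$ itself and so degenerates at $\btheta$.

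The one substantive caveat is that your proposal stops exactly where the work is. The decisive step is the stopping-time/escape estimate: from any entry into a small ball around $\btheta$ at any sufficiently late time, the process exits the ball with probability bounded below uniformly in the entry time, which is then chained over successive visits. That estimate (a second-moment, Paley--Zygmund-type comparison between the accumulated martingale noise, whose conditional variances sum like $1/t$, and the accumulated drift) is the technical heart of Theorem 5.2, and as written you only name it rather than carry it out. As a proof plan your proposal is faithful to the cited argument; as a proof it is incomplete at precisely that step, which is also the step the paper itself delegates entirely to \cite{arthur1986}.
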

\fi 

\subsection{\revise{Lyapunov Function for General Networks}}


\newcommand{\hfunc}{f}

\revise{Next, we show general conditions on urn function $\boldf = (\hfunc, \dots, \hfunc)$ so that conditions in \Cref{thm::arthur1986} are met. }

\begin{theorem}\label{thm::lyapunov_monotone}
Let 
\begin{align}
    F(\bbeta, \bgamma) &=  
    \begin{bmatrix}
    \hfunc\left( \mu_1 ,\gamma_{1}\right) \\
    \vdots
    \\
    \hfunc\left( \mu_n,\gamma_{n}\right) 
    \end{bmatrix}
\end{align}
where $\mu_i = \frac{1}{\degree(i)} \sum_{j = 1}^n a_{i,j} \beta_{j}$. Suppose that the network associated with the adjacency matrix $\bA$ is undirected. Then, there exists a \revisered{twice continuously differentiable} Lyapunov function $V$, where $V\geq0$ such that
\begin{align}
    \langle   F(\bbeta,\bgamma) - \bbeta, \nabla V(\bbeta) \rangle \leq 0 \label{eq::inner_product_prop_monoton}
\end{align}
so long as
\begin{itemize}
    \item $\hfunc(\cdot, \gamma)$ is bijective from  $[0,1]$ to $[0,1]$
    \item $\hfunc(\cdot, \gamma)$ is monotonic.
    \item $\hfunc(\cdot, \gamma)$ is \revisered{continuosuly} differentiable
\end{itemize}
Equality in \eqref{eq::inner_product_prop_monoton} holds iff $F(\bbeta,\bgamma)=\bbeta$.
\end{theorem}

\begin{proof}
Because $\hfunc(\cdot, \gamma)$ is bijective from $[0,1]$ to $[0,1]$, there exists an inverse 
$ \hfunc^{-1}(\cdot, \gamma)$.
The function $\hfunc^{-1}(\cdot, \gamma)$ is also (strictly) monotonically increasing.
We use the notation
\begin{align}
    H(\mu, \gamma) = \int_{0}^\mu \hfunc^{-1}(\nu, \gamma) d\nu \,.
\end{align}

Let 
\begin{align}
\label{eq::lyapunov_for_convergence}
V(\bbeta) = \sum_{i = 1}^n \sum_{j = 1}^n a_{i,j} \left(  H\left(\beta_i, {\gamma_i}\right)  -   \frac{1}{2}\beta_i \beta_j \right) + C
\end{align}
(where $C$ is a constant to make $V$ positive). Taking the partial derivatives gives that
\begin{align}
    \frac{\partial V}{\partial \beta_i} &= \degree(i) \hfunc^{-1}\left(\beta_i, {\gamma_i}\right) - \sum_{j =1}^n a_{i,j}\beta_j
    \label{eq::monotone_lyapunov_V_partial_D}
    \\ & = \degree(i) \left( \hfunc^{-1}\left(\beta_i, {\gamma_i}\right) - \frac{1}{ \degree(i)}\sum_{j=1}^n a_{i,j}\beta_j\right)
      \\ & = \degree(i) \left( \hfunc^{-1}\left(\beta_i, {\gamma_i}\right) - \mu_i\right)\,.
\end{align}
(The property that $\bA$ is symmetric is necessary for \eqref{eq::monotone_lyapunov_V_partial_D}.) \revisered{If $\hfunc$ is continuously differentiable then so is $\hfunc^{-1}$ and  $V$ is twice continuously differentiable.}
We can write the $i$th entry in vector $F(\bbeta,\bgamma) - \bbeta$ as
\begin{align}
     (F(\bbeta,\bgamma) - \bbeta)_i =  \hfunc\left( \mu_i,\gamma_{i}\right) - \beta_i \,.
\end{align}
Then
\begin{align}
     &\langle   F(\bbeta,\bgamma) - \bbeta, \nabla V(\bbeta) \rangle 
     \eqlinebreakshort
     \hspace{-2pc} = \sum_{i = 1}^n  \degree(i) \left( \hfunc^{-1}\left(\beta_i, {\gamma_i}\right) - \mu_i\right)\bigg( \hfunc\left( \mu_i,\gamma_{i}\right) - \beta_i\bigg)\,.\label{eq::inner_product_beta_lap}
\end{align}
Suppose $ \hfunc\left( \mu_i,\gamma_{i}\right) > \beta_i$. Then since $\hfunc$ is (strictly) monotone, 
\begin{align}
    \hfunc\left( \mu_i,\gamma_{i}\right) & > \beta_i 
    \\ \iff
    \hfunc^{-1}\left( \hfunc\left( \mu_i,\gamma_{i}\right),{\gamma_i}\right) &> \hfunc^{-1}\left(\beta_i,{\gamma_i}\right)
     \\ \iff
   \mu_i &> \hfunc^{-1}\left(\beta_i,{\gamma_i}\right)\,.
\end{align}
We can conclude that in the case of $ \hfunc\left( \mu_i,\gamma_{i}\right) \neq \beta_i$ the sign of the terms $ \left( \hfunc^{-1}\left(\beta_i, {\gamma_i}\right) - \mu_i\right)$ and $\left( \hfunc\left( \mu_i,\gamma_{i}\right) - \beta_i\right)$ are necessarily different. Hence, their product must be negative.
When $ \hfunc\left( \mu_i,\gamma_{i}\right) = \beta_i$, then the values of both $ \hfunc\left( \mu_i,\gamma_{i}\right) - \beta_i$ and $\hfunc\left( \mu_i,\gamma_{i}\right) - \beta_i$ are zero.

Each term in the sum of \eqref{eq::inner_product_beta_lap} must be nonpositive and thus
\begin{align}
     \langle   F(\bbeta,\bgamma) - \bbeta, \nabla V(\bbeta) \rangle 
     \leq 0 \,.
\end{align}

Equality holds when all terms in the sum of \eqref{eq::inner_product_beta_lap} are zero, which only occurs when  $ \hfunc\left( \mu_i,\gamma_{i}\right) = \beta_i$ for all $i$. This means that \eqref{eq::inner_product_beta_lap} is zero at equilibrium points and only at the equilibrium points. 
\end{proof}

Note that \Cref{thm::lyapunov_monotone} holds for all $\hfunc$ satisfying the given conditions ({not just the specific $f$ defined in \eqref{eq::f_def}), and hence is a general result showing the existence of Lyapunov functions for any dynamics 
satisfying the given conditions on undirected graphs (which do not need to be connected). 

\begin{theorem}\label{thm::must_converge_equilibrium}
  The time-averaged declared opinions $\bbeta(t)$ under the stochastic opinion dynamics governed by \eqref{eq::declared_opinion_prob_2} and \eqref{eq::evolution_stochastic} almost surely \revisered{approaches a set of} equilibrium point of the expected dynamics, that is a fixed point of $F(\cdot,\bgamma)$.
\end{theorem}

\begin{proof}
    \revisered{Follows directly from \Cref{thm::lyapunov_monotone} and \Cref{thm::arthur1986}.
    }

    \revisered{
    Properties of $f(\mu,\gamma)$ defined in \eqref{eq::f_def} include:
\begin{itemize}
    \item $f:[0,1] \to [0,1]$ is bijective in $\mu$ (this can be shown by the fact that $f(f(\mu,1/\gamma),\gamma) = \mu$)
    \item $f$ is monotonic
    \item $f$ is continuously differentiable in $\mu$ \reviseblue{(the derivative is continuous so long as $\gamma > 0$)}.
\end{itemize}
which satisfies the conditions of \Cref{thm::lyapunov_monotone} and those in \Cref{thm::arthur1986}. } 
\end{proof}

\Cref{thm::must_converge_equilibrium} guarantees that almost surely the time-averaged declared opinions for each agent \revisered{$i$ approaches the equilibria of the expected dynamics. Since it is unclear if equilibrium points \revise{(other than $\bzero$ and $\bone$)} are unique or are isolated points, it is possible that each path of the dynamics converges to one of multiple equilibrium points, or the dynamics approaches a set of connected equilibrium points without converging. 
} 

\section{Convergence to Consensus}

\label{sec::converge_to_consensus}

The previous section showed that the opinion dynamics under social pressure almost surely \revisered{approaches a set of equilibria of the expected dynamics, but does not give any details about this equilibria.}
\revise{
Let \emph{boundary equilibrium points} be the points $\bone$ and $\bzero$, while other equilibrium points are \emph{interior equilibrium points}.  Equivalently, an interior equilibrium point is an equilibrium point $\bbeta$ where $0 < \beta_i < 1$ for all $i$. 
}

\revise{
\begin{lemma}\label{lem::all_zeros_or_none}
Suppose that a finite network of agents is connected. Then, if $\bbeta^*$ is an equilibrium point 
such that for some $i$, $\beta_i^* = 0$ (or $\beta_i^* = 1$), then it must be that for all $i$, $\beta_i^* = 0$ (or respectively for all $i$, $\beta_i^* = 1$).
\end{lemma}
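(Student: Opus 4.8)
The plan is to exploit the boundary behavior of the function $f$ from \eqref{eq::f_def} together with the connectivity of $G$. First I would record the key observation that, for any $\gamma > 0$ and any $\mu \in [0,1]$, we have $f(\mu, \gamma) = 0$ if and only if $\mu = 0$, and $f(\mu, \gamma) = 1$ if and only if $\mu = 1$. This is immediate from \eqref{eq::f_def}: the denominator $1 + (\gamma - 1)\mu = (1-\mu) + \gamma\mu$ is strictly positive on $[0,1]$, so $f(\mu,\gamma) = 0$ forces the numerator $\gamma\mu = 0$, i.e. $\mu = 0$; and $f(\mu,\gamma) = 1$ is equivalent to $\gamma\mu = 1 + (\gamma-1)\mu$, i.e. $1 - \mu = 0$. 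I would also note that, since $\degree(i) = \sum_j a_{i,j}$, the argument $\mu_i = \frac{1}{\degree(i)}\sum_j a_{i,j}\beta_j^*$ appearing inside $F_i$ is a convex combination of the coordinates $\beta_j^*$, with nonnegative weights $a_{i,j}/\degree(i)$ summing to $1$.

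Next, take an equilibrium point $\bbeta^*$ and suppose $\beta_i^* = 0$ for some $i$. The equilibrium condition $\beta_i^* = F_i(\bbeta^*, \bgamma) = f(\mu_i, \gamma_i)$ together with the observation above gives $\mu_i = 0$. Since $\mu_i$ is a sum of the nonnegative terms $a_{i,j}\beta_j^*/\degree(i)$ that equals zero, every such term must vanish; in particular $\beta_j^* = 0$ for every neighbor $j$ of $i$ (every $j$ with $a_{i,j} > 0$). Thus the vertex set $Z \eqdef \{j : \beta_j^* = 0\}$ is nonempty and closed under taking neighbors in $G$. Invoking connectivity, every vertex is reachable from $i$ by a path, so propagating the zero along edges yields $Z = V$, i.e. $\bbeta^* = \bzero$.

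For the case $\beta_i^* = 1$ I would either run the symmetric argument --- $\mu_i = 1$ forces $\sum_j a_{i,j}\beta_j^* = \degree(i)$, which combined with $\beta_j^* \le 1$ forces $\beta_j^* = 1$ on all neighbors, and connectivity again propagates --- or, more economically, reduce it to the previous case via the reciprocal symmetry \eqref{eq::reciprocal_for_f}. Setting $\tilde\bbeta = \bone - \bbeta^*$, one checks using \eqref{eq::reciprocal_for_f} that $\tilde\bbeta$ is an equilibrium point for the bias parameters $1/\gamma_i$, and $\tilde\beta_i = 0$ exactly where $\beta_i^* = 1$; applying the $\beta = 0$ case then gives $\tilde\bbeta = \bzero$, i.e. $\bbeta^* = \bone$.

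Since the core is a propagation-along-edges argument driven by connectivity, no single step is a serious obstacle. The one place warranting care is establishing the boundary characterization of $f$ and confirming that $\mu_i \in [0,1]$ so that the characterization applies; this holds because the coordinates of an equilibrium $\bbeta^*$ lie in $[0,1]$ (each $\beta_j^* = f(\mu_j,\gamma_j)$ is a value of $f$, and $f(\cdot,\gamma)$ maps $[0,1]$ into $[0,1]$, so $F$ maps $[0,1]^n$ into itself), which keeps the convex-combination argument inside the valid range.
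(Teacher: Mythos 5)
Your proof is correct and follows essentially the same route as the paper's: a zero coordinate at an equilibrium forces $\mu_i = 0$, hence zeros at all neighbors, and connectivity propagates this to the whole graph. You simply make explicit the boundary characterization of $f$ (that $f(\mu,\gamma)=0$ iff $\mu=0$) and the handling of the $\beta_i^*=1$ case, which the paper leaves implicit.
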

\begin{proof}
Suppose that $\beta_i^* = 0$. The only way for this to occur at an equilibrium point is for each of agent $i$'s neighbors $j$ to also have $\beta_j^* = 0$. If any $\beta_j^* > 0$, then $\beta_i^* > 0$ since $\beta_i$ gets a positive contribution from $\beta_j$ in its sum. We continue by inducting on the neighbors of neighbors, and it gives that all agents $j$ in the connected network must have  $\beta_j^* = 0$.
\end{proof} 
}

Since there are multiple equilibrium points (not all necessarily stable) in any opinion dynamics system, in this section we explore conditions under which the system asymptotically converges to a boundary equilibrium point. 
When the system converges to a boundary equilibrium point, 
we say that the agents approach consensus. Consensus occurs when all agents (asymptotically) converge to declaring the same opinion with probability $1$. 

\begin{definition}\label{def::consensus}
\emph{Consensus} is approached if 
\begin{align}
    \bbeta(t) \to \bone \text{ or } \bbeta(t) \to \bzero\quad \text{as}\quad t\to\infty\,.
\end{align}
\end{definition}
In this section, we establish necessary and sufficient conditions for convergence to consensus. In particular,
we show that the probability of approaching consensus is either $0$ or $1$.
 
Recall that $\bbeta$ is \revise{the vector of agents' time-averaged declared opinions.} \Cref{def::consensus} does not imply that any agent will always declare the same opinion, only that her ratio of declared opinions tends to $1$ or $0$. The former statement, in fact, is not true in our opinion dynamics setting.

\begin{lemma}\label{lem::infinitely_many_declare}
Any agent $i$ will almost surely declare infinitely many $0$'s and infinitely many $1$'s.
\end{lemma}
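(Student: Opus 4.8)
The plan is to bound from below the probability that agent $i$ declares each opinion at every step, and then apply the conditional (second) Borel--Cantelli lemma. By \eqref{eq::declared_opinion_prob_2}, the declaration probabilities given the history are $\mathbb{P}[\psi_{i,t+1}=1 \mid \cH_t] = f(\mu_i(t),\gamma_i)$ and $\mathbb{P}[\psi_{i,t+1}=0 \mid \cH_t] = 1 - f(\mu_i(t),\gamma_i)$, and $\mu_i(t)$ is $\cH_t$-measurable. It therefore suffices to show that on every sample path
\begin{align}
    \sum_{t} f(\mu_i(t),\gamma_i) = \infty \quad\text{and}\quad \sum_{t} \bigl(1 - f(\mu_i(t),\gamma_i)\bigr) = \infty .
\end{align}
Lévy's extension of the Borel--Cantelli lemma (the conditional second Borel--Cantelli lemma, applied to the filtration $\cH_0, \cH_1, \dots$) then guarantees that the events $\{\psi_{i,t+1}=1\}$ and $\{\psi_{i,t+1}=0\}$ each occur infinitely often almost surely.

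The first step is a deterministic two-sided bound keeping $\mu_i(t)$ strictly inside $(0,1)$. Since the summations in \eqref{eq::M0_counts_def}--\eqref{eq::M1_counts_def} are nonnegative we have $M_i^1(t) \geq m_i^1 > 0$ and $M_i^0(t) \geq m_i^0 > 0$, while $M_i(t) = m_i^0 + m_i^1 + (t-1)\degree(i)$, so
\begin{align}
    \frac{m_i^1}{M_i(t)} \leq \mu_i(t) \leq 1 - \frac{m_i^0}{M_i(t)} .
\end{align}
Because $M_i(t)$ grows linearly in $t$, there are constants $c, c' > 0$ with $\mu_i(t) \geq c/t$ and $1-\mu_i(t) \geq c'/t$ for all sufficiently large $t$.

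Next I would push these bounds through $f$, using that $f(\cdot,\gamma_i)$ is increasing and vanishes only linearly at the endpoints: $f(\mu,\gamma_i) \sim \gamma_i \mu$ as $\mu \to 0$, and by the symmetry \eqref{eq::reciprocal_for_f}, $1 - f(\mu,\gamma_i) = f(1-\mu, 1/\gamma_i) \sim (1-\mu)/\gamma_i$ as $\mu \to 1$. Combined with the previous step, there exist constants $c'', c''' > 0$ with $f(\mu_i(t),\gamma_i) \geq c''/t$ and $1 - f(\mu_i(t),\gamma_i) \geq c'''/t$ for all large $t$. Since the harmonic series diverges, both conditional-probability sums diverge on every sample path, and the conclusion follows from Lévy's extension.

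The main obstacle is ruling out that the declaration probabilities decay faster than $1/t$, which would make the sums converge and render Borel--Cantelli inapplicable. This is exactly where the strictly positive initial settings $m_i^0, m_i^1 > 0$ are indispensable: they keep $M_i^1(t)$ and $M_i^0(t)$ bounded away from $0$, pinning $\mu_i(t)$ away from the boundary by a margin of order $1/t$, precisely the borderline rate at which the summation still diverges. The only other point requiring care is confirming that $f$ vanishes linearly (not faster) near each endpoint, which the explicit form \eqref{eq::f_def} makes routine.
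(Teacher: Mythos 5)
Your proof is correct, and it reaches the conclusion by a genuinely different mechanism than the paper, even though the quantitative heart is the same. Both arguments hinge on the identical key estimate: because $m_i^0, m_i^1 > 0$ and $M_i(t)$ grows linearly, $\mu_i(t)$ and $1-\mu_i(t)$ are bounded below by $c/t$, and since $f(\cdot,\gamma_i)$ vanishes only linearly at the endpoints (the denominator $1+(\gamma_i-1)\mu$ is bounded between $\min\{1,\gamma_i\}$ and $\max\{1,\gamma_i\}$), the declaration probabilities are bounded below by a constant times $1/t$, giving a divergent harmonic-type sum. Where you diverge is in how Borel--Cantelli is deployed: the paper cannot apply the classical second Borel--Cantelli lemma directly because the declarations $\psi_{i,t}$ are not independent, so it constructs an auxiliary ``dummy'' agent whose declarations \emph{are} independent (driven by the deterministic lower bound $\mu_{\fake}(t) = m_i^1/(\degree(i)\,t)$), applies classical Borel--Cantelli there, and then transfers the conclusion to agent $i$ via a monotone coupling using $\mu_i(t)\geq\mu_{\fake}(t)$. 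You instead invoke L\'{e}vy's conditional extension of the second Borel--Cantelli lemma on the filtration $\cH_t$ directly, which absorbs the dependence and makes the coupling unnecessary; since your lower bound on the conditional probabilities holds deterministically on every sample path, the divergence of $\sum_t \bbP[\psi_{i,t+1}=1\mid\cH_t]$ is sure, and the conclusion is immediate. Your route is shorter and arguably cleaner at the cost of citing a stronger (though standard) martingale-theoretic tool; the paper's route uses only the elementary independent-case lemma plus an explicit coupling construction.
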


We remark that \Cref{lem::infinitely_many_declare} does not contradict the existence of consensus. Even if agent $i$ declares infinitely many $0$'s and $1$'s, if the ratio of the number of $0$'s declared is less than linear compared to the number of $1$'s declared, then $\beta_i(t) \to 0$.

\Cref{lem::infinitely_many_declare} is fundamentally the same idea as \cite[Proposition 2]{socialPressure2021}, but in our case, we are working with a general network and we emphasize a different aspect of the result. (Recall that the starting condition is always such that $\beta_i(0)$ is not $0$ or $1$, otherwise \Cref{lem::infinitely_many_declare} would not be true. Most of the results in the section crucially depend on this fact.)

%

\begin{proof}[Proof of \Cref{lem::infinitely_many_declare}]

We first create a dummy agent $\fake$ based on agent $i$. Agent $\fake$ makes declarations $\psi_{\fake,t}$ defined by
\begin{align}
\psi_{\fake,t} = \begin{cases}
1 & \text{ w.p } f(\mu_{\fake}(t), \gamma_i)
\\ 0 & \text{ w.p } 1 - f(\mu_{\fake}(t), \gamma_i)
\end{cases}
\end{align}
where unlike agent $i$, we fix that
$
\mu_{\fake}(t) = \frac{m_i^{1}}{\degree(i) t} 
$
for each time step
(recall that $m_i^1$ is an initialization for agent $i$, and our model assumes $m_i^1 > 0$). 
Let $m_{\fake} \eqdef  {m_i^{1}}/{\degree(i) }$.
We have that
\begin{align}
\sum_{t = 2}^{\infty} \bbP[\psi_{\fake,t} = 1] & = \sum_{t = 2}^{\infty} f(m_{\fake}/ t, \gamma_i)
= \sum_{t = 2}^{\infty} \frac{\gamma_i m_{\fake} }{t + (\gamma_i - 1) m_{\fake}}\,.
\end{align}

If $(\gamma_i - 1)$ is negative, then let $t_0$ be such $t_0 + (\gamma_i - 1) m_{\fake} \geq 1$, which gives
\begin{align}
\sum_{t = 2}^{\infty} \frac{\gamma_i m_{\fake} }{t + (\gamma_i - 1) m_{\fake}} \geq \sum_{t = t_0} ^{\infty} \frac{\gamma_i m_{\fake} }{t + (\gamma_i - 1) m_{\fake}} \geq \sum_{t' = 1}^{\infty} \frac{c}{t'} = \infty.
\end{align}
If $(\gamma_i - 1)$ is positive, then
\begin{align}
    \sum_{t = 2}^{\infty} \frac{\gamma_i m_{\fake} }{t + (\gamma_i - 1) m_{\fake}} \geq 
    \sum_{t = 1}^{\infty} \frac{\gamma_i m_{\fake}}{t} = \infty.
\end{align}

Because $\sum_{t = 2}^{\infty} \bbP[\psi_{\fake,t} = 1] = \infty$ and each declaration is independent for agent $\fake$, using the (second) Borel-Cantelli lemma gives that $\psi_{\fake,t}$ is $1$ infinitely often almost surely.

Next, we couple agent $i$'s declaration with agent $\fake$'s. Since 
\ifeight
$\mu_i(t) \geq \frac{m_{\fake}}{t} = \mu_{\fake}(t)$
\else
\begin{align}
\mu_i(t) \geq \frac{m_{\fake}}{t} = \mu_{\fake}(t)
\end{align}
\fi
we can create a joint distribution where
$\psi_{i, t} = 1$ if $\psi_{\fake,t} = 1$
and $\psi_{i, t} = 1$ with probability 
\ifeight
$\frac{f(\mu_{\fake}(t), \gamma_i) - f(\mu_i(t), \gamma_i)}{1 - f(\mu_i(t), \gamma_i)}$
\else
\begin{align}
\frac{f(\mu_{\fake}(t), \gamma_i) - f(\mu_i(t), \gamma_i)}{1 - f(\mu_i(t), \gamma_i)}
\end{align}
\fi
if $\psi_{\fake,t} = 0$.
The marginal distributions on this coupling shows that $\psi_{i,t}$ is $1$ more often than $\psi_{\fake,i}$ is $1$, thus
 $\psi_{i,t}$ must be $1$ infinitely often almost surely. 
 By symmetry, the same result holds for declaring infinitely many $0$'s.
\end{proof}


\revise{Whether or not $\bbeta(t)$ converges to a boundary equilibrium point} is closely related to the Jacobian matrix of $F(\cdot,\bgamma)$. 
\ifeight
To calculate the Jacobian, we have
\else
To calculate the Jacobian $\frac{\partial}{\partial\bbeta}F(\bbeta,\bgamma)$,  
recall
\begin{align}
    F_i(\bbeta,\bgamma) = f(\mu_i,\gamma_i) = \frac{\gamma_i \mu_i }{1 + (\gamma_i - 1) \mu_i}
\end{align}
where we denote $\mu_i = \frac{1}{\degree(i)} \sum_{j = 1}^n a_{i,j} \beta_j$.
As a result,
\fi
\begin{align}
    \frac{\partial}{\partial \mu_i} F_i(\bbeta,\bgamma) = 
\frac{\gamma_i}{(1+(\gamma_i - 1)\mu_i)^2}
\end{align}
and thus
\begin{align}
\frac{\partial}{\partial\bbeta}F(\bbeta,\bgamma) &= \frac{\partial}{\partial\bmu}F(\bbeta,\bgamma) 
\frac{d\bmu}{d \bbeta}
\\&
= \diag\left(\begin{bmatrix}
     \frac{\gamma_1}{(1+(\gamma_1 - 1)\mu_1)^2}
     \\
     \vdots
     \\
      \frac{\gamma_n}{(1+(\gamma_n - 1)\mu_n)^2}
    \end{bmatrix}\right) \bW.
\end{align}

We define for each vector $\bx$ where $x_i \in [0, 1]$, 
\begin{align}
    \label{eq::dermat_def}
    \dermat{\bx} \eqdef \diag\left(\begin{bmatrix}
     \frac{\gamma_1}{(1+(\gamma_1 - 1)\frac{1}{\degree(1)} \sum_{j } a_{1,j} x_1)^2}
     \\
     \vdots
     \\
      \frac{\gamma_n}{(1+(\gamma_n - 1)\frac{1}{\degree(n)} \sum_{j} a_{n,j} x_n)^2}
    \end{bmatrix}\right) \bW \,.
\end{align}
Importantly, $\dermat{\bx}$ has all real eigenvalues. 
    \ifeight
    \else
    This is shown in \Cref{lem::dermat_real_eigenvalues} in \Cref{sec::local_stability}.
    \fi
    
The Jacobian at boundary equilibrium points $\bzero$ and $\bone$ are
\begin{align}
\dermat{\bone} = \frac{\partial}{\partial\bbeta}F(\bbeta,\bgamma)|_{\bbeta = \bone} &=  \bGamma^{-1}\bW\label{eq::dermat_1}
\\ \dermat{\bzero} = \frac{\partial}{\partial\bbeta}F(\bbeta,\bgamma)|_{\bbeta = \bzero} &= \bGamma\bW ,
\end{align}
where $\bGamma$ is defined in \eqref{eq::diag_gamma_matrix}. 

We will prove that determining properties of $\dermat{\bzero}$ and  $\dermat{\bone}$ suffices to determine whether $\bbeta(t)$ approaches a boundary equilibrium point or not. In particular, if the eigenvalues of the Jacobian evaluated at an equilibrium point are all less than $1$, then the equilibrium point is a stable equilibrium point. 
To do this, in the next step we establish that for boundary equilibrium points $\bx$, if the largest eigenvalue of $\dermat{\bx}$ is larger than $1$, then the full stochastic process cannot converge to $\bx$. Several intermediate results need to be shown in order to prove this. The main tool is given by 
\cite[Theorem 3]{renlund2010generalized} 
and this is also a key result used by \cite{socialPressure2021}. We state it with notation adjusted\footnote{Two of the conditions originally stated in \cite[Theorem 3]{renlund2010generalized} are  combined to make one condition in our statement.} for our use.

\begin{proposition}[Adapted from \cite{renlund2010generalized}]\label{prop::conditions_zero_prob}
Suppose there is a stochastic process $X(t)$ where $0 \leq X(t) \leq 1$, a filtration $\cF_t$, and a $\varepsilon > 0$ with the following properties:
\begin{enumerate}[label=(\alph*)]
\item \label{item::conditions_Xgreaterep} If $X(t) \leq \varepsilon$,
$
\bbE[X(t+1) - X(t)|\cF_t] \geq 0
$
\item \label{item::conditions_var}$\var[X(t+1) - X(t)|\cF_t] \leq c_1 \frac{X(t)}{(t+1)^2}$ 
\item \label{item::conditions_inf}$\lim_{t \to \infty} t X(t) = \infty$ with probability $1$ \,.
\end{enumerate}
Then
$\bbP\left[\lim_{t \to \infty} X(t) = 0\right] = 0$.
\end{proposition}
%
%
%
\ifshowproof 
\else
\reviseblue{
\begin{proof}[Proof Sketch]
Suppose the trajectory starts at $X(t_0)$. We then look at which of the following two events occurs first: the trajectory starting at $X(t_0)$ becomes $2 X(t_0)$ or the trajectory starting at $X(t_0)$ becomes $a X(t_0)$ for some $a < 1$. The probabilities of each happening is computed using \ref{item::conditions_Xgreaterep} and \ref{item::conditions_var}. Then we compare the probability that the trajectory becomes $2^k X(t_0)$ before it becomes $a X(t_0)$. Call this event $B_k$. For the dynamics to not converge to zero, we show that the probability of $\bbP[ \cup_{k}^\infty B_k] $ goes to $1$ (using \ref{item::conditions_inf}).
\end{proof}
}
\fi 

\ifshowproof 
The proof below is included for completeness.

\begin{proof}

Let $t_0$ be some time where 
$X(t_0) \leq \varepsilon$. If this does not exist, then $X(t)$ cannot converge to $0$.
Let $a \in (0, 1)$ and
define the following stopping times in terms of $t_0$:
\begin{align}
\tau_1 &= \inf\{t > t_0: \, X(t) \geq 2 X(t_0) \wedge \varepsilon  \}
\\\hat \tau_1 &= \inf\{t > t_0: \,  X(t) \leq a X(t_0) \}\,.
\end{align}

Let event $A = \{ \hat \tau_1 < \tau_1 \}$ and $B =\{ \tau_1 < \hat \tau_1 \} $. 
\revise{The event $A$ represents the case where the trajectory $X(t)$ gets to be twice the value of $X(t_0)$ (or gets larger than $\epsilon$) before it gets factor $a$ closer to $0$.} \revise{Let $\tau = \tau_1 \wedge \hat \tau_1$.}
Define
\begin{align}
Y(t) &= X(t+1) - X(t)
\\Z(t) &= \bbE[Y(t)| \cF_t] - Y(t)
\\W(t) &= \sum_{s = t_0 }^{t-1} Z(s)\,.
\end{align}
\revise{and thus $W(t)$ is a martingale process.}

On event $A$, we have that
\begin{align}
W(\hat\tau_1) 
&= \sum_{s = t_0}^{\hat\tau_1 - 1}\bbE[Y(s) | \cF_t]  - \sum_{s = t_0 }^{\hat\tau_1 - 1}(X(s+1) - X(s))
\\& \geq 0 - X(\hat \tau_1) + X(t_0)
\\ & \geq ( 1 - a) X(t_0) 
\end{align}
where we used property \ref{item::conditions_Xgreaterep}.
This gives that
\begin{align}
\bbE[W(\tau) | A] \geq (1-a) X(t_0)\,.
\end{align}

Next, using independence and property \ref{item::conditions_var}
\begin{align}
\bbE[W(t)^2] &= \sum_{s = t_0}^{t-1} \bbE[Z(s)^2]
\\ & = \sum_{s = t_0}^{t-1} \var[Y(s)]
\\ & \leq \sum_{s = t_0}^{t-1} c_1 \frac{X(s)}{(s+1)^2}\,.
\end{align}
Since $X(s) \leq 2 X(t_0) \wedge \varepsilon \leq 2 X(t_0)$, this gives that
\begin{align}
\bbE[W(\tau)^2] &\leq c_1 2 X(t_0) \sum_{s = t_0}^{\tau-1} \frac{1}{(s+1)^2}
\\&\leq c_1 2 X(t_0) \sum_{s = t_0}^{\infty} \frac{1}{(s+1)^2}
\\&\leq c_1 2 X(t_0) \frac{1}{t_0}
\,.
\end{align}

Next we show the inequality derived in \cite[Lemma 6]{renlund2010generalized}.
Since $W$ is a martingale
\begin{align}
0 &= \bbE[W(\tau)] = \bbE[W(\tau)| A] \bbP[A] + \bbE[W(\tau) | B] \bbP[B]
\\& \implies ( \bbE[W(\tau) | B])^2 =  (\bbE[W(\tau)| A])^2 \frac{(\bbP[A])^2}{(\bbP[B])^2}
\end{align}
Then
\begin{align}
\bbE[W(\tau)^2] 
& = \bbE[W(\tau)^2| A] \bbP[A] + \bbE[W(\tau)^2 | B] \bbP[B]
\\&\geq (\bbE[W(\tau)| A])^2 \bbP[A] + (\bbE[W(\tau) | B])^2 \bbP[B]
\\& = (\bbE[W(\tau)| A])^2 \bbP[A] + (\bbE[W(\tau)| A])^2 \frac{(\bbP[A])^2}{(\bbP[B])}
\\& = (\bbE[W(\tau)| A])^2 \frac{\bbP[A]}{\bbP[B]} \left(\bbP[B] + \bbP[A] \right)
\\& = (\bbE[W(\tau)| A])^2 \frac{\bbP[A]}{\bbP[B]} 
\end{align}

Using the above inequality
, we have that
\begin{align}
\frac{\bbP[B]}{\bbP[A]} &\geq \frac{(\bbE[W(\tau)| A])^2}{\bbE[W(\tau)^2]}
\\ &\geq \frac{((1-a) X(t_0))^2}{2 c_1 X(t_0) \frac{1}{t_0}}
\\ & \geq c(a) t_0  X(t_0)
\end{align}
where 
\begin{align}
c(a) = \frac{(1-a)^2}{2c_1}\,.
\end{align}

\revise{Since $B = A^{c}$, then $\bbP[A] = 1 - \bbP[B]$, and 
\begin{align}
\frac{\bbP[B]}{1 - \bbP[B]} > \kappa \implies \bbP[B] \geq 1 - \frac{1}{1+\kappa}
\end{align}
which gives that}
\begin{align}\label{eq::prob_B_stoppingtime1}
\bbP[B] \geq 1 - \frac{1}{1 + c(a) t_0 X(t_0)}
\,.
\end{align}

We define stopping times recursively as
\begin{align}
\tau_{k + 1} &= \inf \{ t \geq \tau_k: \, X(t) \geq 2 X(\tau_k ) \wedge \varepsilon \}
\\ \hat \tau_{k+1} &= \inf \{ t \geq \tau_k: \, X(t) \leq a X(t_0) = a_k X(\tau_k)  \}
\end{align}
where $a_k$ is a stochastic value such that $a X(t_0) = a_k X(\tau_k) $ such that $a_k \leq a$. This implies that $c(a_k) > c(a)$.
Define the events $B_k = \{ \tau_k  \leq \hat \tau_{k}\}$ and $T_k = \tau_k \wedge \hat \tau_k$.

Using \eqref{eq::prob_B_stoppingtime1},
if $X(\tau_k) \geq 2^k X(t_0) $ (and $X(\tau_k) < \varepsilon$),
\begin{align}
\bbP[B_{k+1}| B_{k}, \cF_{T_k}] 
&\geq 1 - \frac{1}{1 + c(a_k) \tau_k X(\tau_k)}
\\ 
& \geq 1 - \frac{1}{1+c(a) 2^k t_0 X(t_0)}
\,.
\end{align}

If $X(\tau_k) \geq \varepsilon$, then $\bbP[B_{k+1}| B_{k}, \cF_{T_k}] = 1$. 

Then
\begin{align}
\bbP\left[\sup_{t > t_0}{X_t} > \varepsilon\right] &\geq \bbP\left[ \bigcap_{k}^\infty B_k\right]
\\& \geq \prod_{k = 1}^{\infty} 1 - \frac{1}{1 + c(a) 2^k t_0 X(t_0)} 
\\ &\geq 1 - \sum_{k = 1}^\infty \frac{1}{1+c(a) 2^k t_0 X(t_0)} 
\\ &\geq 1 - \sum_{k = 1}^\infty \frac{1}{c(a) 2^k t_0 X(t_0)} 
\\ &= 1 - \frac{2}{c(a) t_0 X(t_0) }\,.
\end{align}

To show that the above approaches $1$ as $t_0 \to \infty$, we use property \ref{item::conditions_inf}.
%
This shows that with probability $1$, if $X(t) < \varepsilon$ at any time $t_0$, there is another time $t > t_0$ where $X(t) \geq \varepsilon$. Thus, with probability $1$, $X(t)$ cannot converge to $0$.
\end{proof}
\fi 
%
%

To show our desired result, we need to find a process $X(t)$ based on the value of $\bbeta$ which fits the conditions of \Cref{prop::conditions_zero_prob}. This is done in the next lemma.

\begin{lemma}\label{lem::meet_conditions}
Suppose that $\dermat{\bx}$ for $\bx \in \{\bzero, \bone\}$ has an eigenvalue larger than $1$. Then there exists a $\varepsilon > 0$ and a function $V: [0,1]^n \to [0,1]$ such that $V(\bbeta)$ is a stochastic process satisfying all the conditions of \Cref{prop::conditions_zero_prob}. 
\end{lemma}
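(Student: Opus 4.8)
The plan is to build the required process as a single positive linear functional of $\bbeta(t)$ aligned with the unstable (Perron) direction of the boundary Jacobian, and to read off the three conditions from the one-step expected dynamics. First I would reduce to the case $\bx=\bzero$: by the opinion-swap symmetry \eqref{eq::reciprocal_for_f}, replacing $\bbeta$ by $\bone-\bbeta$ and each $\gamma_i$ by $1/\gamma_i$ turns $\dermat{\bone}=\bGamma^{-1}\bW$ (see \eqref{eq::dermat_1}) into the $\dermat{\bzero}$ of the transformed process and maps $\bbeta(t)\to\bone$ into (transformed) $\bbeta(t)\to\bzero$, so the eigenvalue hypothesis is preserved and the $\bone$ case follows from the $\bzero$ case. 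So assume $\dermat{\bzero}=\bGamma\bW$ has an eigenvalue exceeding $1$. Since $\bGamma$ has strictly positive diagonal and $\bW\geq 0$ is irreducible ($G$ connected), $\bGamma\bW$ is nonnegative and irreducible with all-real eigenvalues, so by Perron--Frobenius its spectral radius $\lambda$ is its largest eigenvalue and admits a strictly positive left eigenvector $\mathbf{v}>\bzero$ with $\mathbf{v}^{\ltop}\bGamma\bW=\lambda\mathbf{v}^{\ltop}$; by hypothesis $\lambda>1$. I would then set $V(\bbeta)=\langle \mathbf{v},\bbeta\rangle/Z$ with $Z=\langle \mathbf{v},\bone\rangle$, so that $X(t)\eqdef V(\bbeta(t))\in[0,1]$, taken against the filtration $\cH_t$.

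For condition \ref{item::conditions_Xgreaterep}, the one-step expected dynamics give $\bbE[X(t+1)-X(t)\mid\cH_t]=\frac{1}{(t+1)Z}\langle \mathbf{v},\,F(\bbeta(t),\bgamma)-\bbeta(t)\rangle$. Taylor-expanding $f(\mu,\gamma)=\gamma\mu-\gamma(\gamma-1)\mu^2+O(\mu^3)$ about $\mu=0$ yields $F(\bbeta,\bgamma)=\bGamma\bW\bbeta+R(\bbeta)$ with $\|R(\bbeta)\|\leq C_1\|\bbeta\|_1^2$ near $\bzero$. Because $\mathbf{v}$ is the left $\lambda$-eigenvector, the linear part contributes $\langle \mathbf{v},(\bGamma\bW-\bI)\bbeta\rangle=(\lambda-1)\langle \mathbf{v},\bbeta\rangle=(\lambda-1)ZX(t)\geq 0$, while $|\langle \mathbf{v},R(\bbeta)\rangle|\leq C_2\|\bbeta\|_1^2$. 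Since $\mathbf{v}>\bzero$ and $\bbeta(t)\geq\bzero$, I have $\langle \mathbf{v},\bbeta\rangle\geq v_{\min}\|\bbeta\|_1$, so the order-$\|\bbeta\|_1$ drift dominates the order-$\|\bbeta\|_1^2$ remainder once $\|\bbeta\|_1$ is small; and $X(t)\leq\varepsilon$ forces $\|\bbeta\|_1\leq \varepsilon Z/v_{\min}$, so a sufficiently small $\varepsilon$ makes $\langle \mathbf{v},F-\bbeta\rangle\geq 0$, giving \ref{item::conditions_Xgreaterep}.

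For condition \ref{item::conditions_var}, from $\beta_i(t+1)-\beta_i(t)=\frac{1}{t+1}(\psi_{i,t+1}-\beta_i(t))$ and the conditional independence of the declarations $\psi_{i,t+1}$ given $\cH_t$, I get $\var[X(t+1)-X(t)\mid\cH_t]=\frac{1}{Z^2(t+1)^2}\sum_i v_i^2\, f(\mu_i,\gamma_i)\big(1-f(\mu_i,\gamma_i)\big)$. Bounding $f(\mu_i,\gamma_i)(1-f(\mu_i,\gamma_i))\leq f(\mu_i,\gamma_i)\leq C_3\mu_i$ together with $\mu_i\leq\max_j\beta_j\leq \frac{Z}{v_{\min}}X(t)$ shows this is at most $c_1 X(t)/(t+1)^2$. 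For condition \ref{item::conditions_inf}, note $tX(t)=\frac{1}{Z}\sum_i v_i B_{i}^1(t)$ with all $v_i>0$, and by \Cref{lem::infinitely_many_declare} each $B_{i}^1(t)\to\infty$ almost surely, so $tX(t)\to\infty$ almost surely.

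The main obstacle is condition \ref{item::conditions_Xgreaterep}: the positive drift must survive the nonlinearity. The quadratic term of $f$ carries the sign $-\gamma_i(\gamma_i-1)$, which is negative when $\gamma_i>1$, so a termwise argument fails; the point is that strict positivity of the Perron eigenvector makes $\langle \mathbf{v},\cdot\rangle$ comparable to $\|\cdot\|_1$, so the linear Perron drift $(\lambda-1)ZX(t)$ beats the quadratic error uniformly on the small-$\varepsilon$ region. The same positivity is what ties the scalar smallness $X(t)\leq\varepsilon$ back to smallness of the full vector $\|\bbeta(t)\|_1$ required by the Taylor bound, so these two uses of $\mathbf{v}>\bzero$ must be coordinated when fixing $\varepsilon$.
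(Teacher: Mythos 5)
Your proposal is correct and follows essentially the same route as the paper: take $V$ to be the (normalized) positive left Perron eigenvector functional of $\dermat{\bx}$, verify the drift condition via a quadratic lower bound on $f$ near the boundary combined with the eigenvalue relation, bound the conditional variance by $f(1-f)\le f\le C\mu_i \le C' V(\bbeta)$, and obtain condition (c) from \Cref{lem::infinitely_many_declare}. The only differences are cosmetic (you absorb the nonlinearity into a remainder dominated by the linear Perron drift, whereas the paper lower-bounds $f(\mu_i,\gamma_i)$ by $\gamma_i\mu_i/\lambda$ for small $\mu_i$), and your reduction of the $\bone$ case to $\bzero$ via \eqref{eq::reciprocal_for_f} matches the paper's symmetry remark.
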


\begin{proof}

For this proof, we let $\bx = \bzero$ (the proof is symmetric when $\bx = \bone$).  Let $\lambda = \lambda_{\max}(\dermat{\bzero})$. By assumption, $\dermat{\bzero}$ has an eigenvalue greater than $1$, so $\lambda > 1$. Let $\bv$ be the associated (left) eigenvector of eigenvalue $\lambda$. Since $\dermat{\bzero}$ is irreducible (since $\bW$ is irreducible) and a nonnegative matrix, the 
Perron-Frobenius theorem \cite{Berman1994} implies that 
$\bv$ is a positive eigenvector. Scale $\bv$ so that $\bv^\ltop \bone = 1$. 
Let
$
V(\bbeta) = \bv ^{\ltop} \bbeta\,.
$

We determine a small enough value for $\varepsilon$ in the next part.
We first derive a lower bound on $f$:
\begin{align}
f(\mu_i, \gamma_i) &\geq \gamma_i \mu_i - \frac{1}{2}c_1 \mu_i^2
\label{eq::lower_bound_f}
\end{align}
where $ c_1 = 2 \gamma_i (\gamma_i - 1)$. We can show this result by showing that
\ifeight
$\frac{\partial^2}{\partial \mu_i^2} f(\mu_i, \gamma_i) \geq -c_1$.
\else
\begin{align}
\frac{\partial^2}{\partial \mu_i^2} f(\mu_i, \gamma_i) \geq -c_1\,.
\end{align}
\fi
To get this, we use
\begin{align}
\frac{\partial^2}{\partial \mu_i^2} f(\mu_i, \gamma_i) = \frac{-2 \gamma_i (\gamma_i - 1)}{(1 + (\gamma_i - 1) \mu_i )^3}\,.
\label{eq::second_deriv_f}
\end{align}
If $\gamma_i > 1$, the numerator of \eqref{eq::second_deriv_f} is negative. To get a lower bound, we want to minimize the denominator. This occurs when $\mu_i = 0$ and the denominator of \eqref{eq::second_deriv_f} is $1$.
Likewise, if $\gamma_i < 1$, the numerator of \eqref{eq::second_deriv_f} is positive, so we want to maximize the denominator. This happens again at $\mu_i = 0$ resulting in a denominator of $1$.
(If $\gamma_i = 0$, then $f$ is linear and $c_1 = 0$.)

As a result of \eqref{eq::lower_bound_f}, there is a $\delta > 0$, such that 
\ifeight
$ f(\mu_i, \gamma_i) > \frac{\gamma_i \mu_i}{\lambda} $
\else
\begin{align}
f(\mu_i, \gamma_i) > \frac{\gamma_i \mu_i}{\lambda}
\end{align}
\fi
for all $\mu_i < \delta$ and all agents $i$.
Recall that
\begin{align}
    \bbeta(t+1) - \bbeta(t) &= \frac{1}{t+1} \left(F(\bbeta(t), \bgamma) - \bbeta(t)\right) 
    + \frac{1}{t+1} \boldeta(t+1)
\end{align}
where $\boldeta(t+1)$ is a vector with $i$th component
\begin{align}
    \eta_i(t+1) = \begin{cases}
    1 - f(\mu_i(t), \gamma_i) & \text{w.p. }  f(\mu_i(t), \gamma_i) \\
    - f(\mu_i(t), \gamma_i) & \text{w.p. }  1 - f(\mu_i(t), \gamma_i)
    \end{cases}
\end{align}
and
\ifeight
$ \bbE[\bbeta(t+1) - \bbeta(t)] = \frac{1}{t+1} \left(F(\bbeta(t), \bgamma) - \bbeta(t)\right)$.
\else
\begin{align}
    \bbE[\bbeta(t+1) - \bbeta(t)] = \frac{1}{t+1} \left(F(\bbeta(t), \bgamma) - \bbeta(t)\right)\,.
\end{align}
\fi

Define
$
Y(t) = V(\bbeta(t+1)) - V(\bbeta(t))\,.
$
We can pick $\varepsilon$ small enough so that $V(\bbeta) < \varepsilon$ implies that $\beta_i < \delta$ for all agents $i$, which then implies that $\mu_i < \delta$ for all agents $i$.

Note that
\begin{align}
&\bbE [Y(t)| \cF_t]  
= \bv^\ltop  \frac{1}{t+1} \left(F(\bbeta(t), \bgamma) - \bbeta(t)\right) 
\\&\geq \frac{1}{t+1}\bv^\ltop \left(\frac{1}{\lambda}\dermat{\bzero} -\bI\right) \bbeta(t)
= \frac{1}{t+1}\left(\frac{\lambda}{\lambda} - 1 \right) \bv^\ltop \bbeta(t)
 = 0
\end{align}
This shows property \ref{item::conditions_Xgreaterep} as stated in \Cref{prop::conditions_zero_prob}.

We compute that for all  $\bbeta$,
\ifeight 
\begin{align}
\var[\eta_i(t+1)] 
& \leq f(\mu_i(t), \gamma_i)
 \leq \max\{\gamma_i, 1/\gamma_i \} \mu_i(t)
\end{align}
\else 
\begin{align}
\var[\eta_i(t+1)] &= (1 - f(\mu_i(t), \gamma_i)) f(\mu_i(t), \gamma_i) 
\\& \leq f(\mu_i(t), \gamma_i)
 \leq \max\{\gamma_i, 1/\gamma_i \} \mu_i(t)
\end{align}
\fi 
Let $c_0 = \max_{i} \left\{\frac{v_i}{\gamma_i} \max\{\gamma_i, 1/\gamma_i \}\right\}$.
\ifeight 
Then
\begin{align}
&\var[Y(t)|\cF_{t}] 
= \frac{1}{(t+1)^2}\sum_{i = 1}^n v_i^2\var\left[ \eta_i(t+1) \right]
\\ & \leq   \frac{1}{(t+1)^2}\sum_{i = 1}^n v_i^2 \max\{\gamma_i, 1/\gamma_i \} \mu_i(t)
\\ & \leq   \frac{1}{(t+1)^2}\sum_{i = 1}^n c_0 v_i \gamma_i  \frac{1}{\degree(i)}\sum_{j} {a_{i,j}} \beta_j(t)
\\ & 
=   \frac{c_0 \lambda }{(t+1)^2}V(\bbeta(t))
\,.
\end{align}
\else  
Then
\begin{align}
&\var[Y(t)|\cF_{t}] 
=  \bbE\left[ \left(\bv^\ltop \frac{1}{t+1} \boldeta(t+1)  \right)^2 \right]
\\&= \frac{1}{(t+1)^2}\sum_{i = 1}^n v_i^2\var\left[ \eta_i(t+1) \right]
\\ & \leq   \frac{1}{(t+1)^2}\sum_{i = 1}^n v_i^2 \max\{\gamma_i, 1/\gamma_i \} \mu_i(t)
\\ & \leq   \frac{1}{(t+1)^2}\sum_{i = 1}^n c_0 v_i \gamma_i  \frac{1}{\degree(i)}\sum_{j} {a_{i,j}} \beta_j(t)
\\ & =   \frac{c_0 \lambda }{(t+1)^2}V(\bbeta(t))
\,.
\end{align}
\fi  
This shows property \ref{item::conditions_var} of \Cref{prop::conditions_zero_prob}.
%
%
%
We write
\begin{align}
t_0 \beta_i(t_0) &=t_0 \frac{b_i^0 + \sum_{s = 1}^{t_0} (1 - \psi_{i, s})}{t_0}\,.
\end{align}
By \Cref{lem::infinitely_many_declare}, we know $\sum_{s = 1}^{t_0} (1 - \psi_{i, s}) \to \infty$ for each $i$, so 
\begin{align}
t_0 V(\bbeta(t_0)) 
 = \sum_{i = 1}^n v_i \left(b_i^0 + \sum_{s = 1}^{t_0} (1 - \psi_{i,s}) \right)
 \to \infty
\end{align}
as $t_0 \to \infty$, which shows property \ref{item::conditions_inf} of \Cref{prop::conditions_zero_prob}.
\end{proof}

\begin{lemma}\label{lem::only_one_eig_smaller}
Only one of $\dermat{\bzero}$ and $\dermat{\bone}$ can have all eigenvalues less than or equal to $1$. 
\end{lemma}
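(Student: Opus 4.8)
The plan is to read off from the excerpt that $\dermat{\bone}=\bGamma^{-1}\bW$ and $\dermat{\bzero}=\bGamma\bW$, and to note that both are nonnegative irreducible matrices (since $\bW$ is irreducible and $\bGamma$ is a positive diagonal matrix) with all real eigenvalues. By Perron--Frobenius their largest eigenvalue coincides with their spectral radius, so the condition ``all eigenvalues $\le 1$'' is equivalent to ``Perron root $\le 1$.'' Writing $\rho_0=\lambda_{\max}(\bGamma\bW)$ and $\rho_1=\lambda_{\max}(\bGamma^{-1}\bW)$, the lemma reduces to showing we cannot have both $\rho_0\le 1$ and $\rho_1\le 1$. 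I would prove the sharper statement $\rho_0\rho_1\ge 1$, together with the fact that equality forces $\bGamma=\bI$; since $\bGamma\ne\bI$ is assumed, this rules out $\rho_0=\rho_1=1$ and hence the ``both $\le 1$'' scenario.

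For the inequality I would take positive right Perron eigenvectors $\bu,\bw>0$ with $\bGamma\bW\bu=\rho_0\bu$ and $\bGamma^{-1}\bW\bw=\rho_1\bw$, which componentwise read $(\bW\bu)_i=\rho_0 u_i/\gamma_i$ and $(\bW\bw)_i=\rho_1\gamma_i w_i$. Setting $s_i=\sqrt{u_i w_i}$ and applying Cauchy--Schwarz to $\sum_j w_{ij}\sqrt{u_j w_j}=\sum_j\sqrt{w_{ij}u_j}\,\sqrt{w_{ij}w_j}$ gives $(\bW\bs)_i\le\sqrt{(\bW\bu)_i(\bW\bw)_i}=\sqrt{\rho_0\rho_1}\,s_i$ for every $i$, where the factor $\gamma_i$ cancels. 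Pairing $\bW\bs\le\sqrt{\rho_0\rho_1}\,\bs$ with the stationary distribution $\boldsymbol\pi>0$ of the stochastic matrix $\bW$ (so $\boldsymbol\pi^{\ltop}\bW=\boldsymbol\pi^{\ltop}$) yields $\boldsymbol\pi^{\ltop}\bs=\boldsymbol\pi^{\ltop}\bW\bs\le\sqrt{\rho_0\rho_1}\,\boldsymbol\pi^{\ltop}\bs$, and since $\boldsymbol\pi^{\ltop}\bs>0$ we obtain $\rho_0\rho_1\ge 1$.

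Finally, suppose toward contradiction that $\rho_0\le1$ and $\rho_1\le1$; then $\rho_0\rho_1\ge1$ forces $\rho_0=\rho_1=1$ and $(\bW\bs)_i\le s_i$ for all $i$. Pairing the nonnegative vector $\bs-\bW\bs$ with $\boldsymbol\pi>0$ shows it vanishes, i.e.\ $\bW\bs=\bs$ with equality in every Cauchy--Schwarz step; equality forces $u_j/w_j$ to be constant over the neighbors $j$ of each vertex $i$. I would then globalize: vertices sharing a common neighbor have equal ratio $u_j/w_j$, and because $G$ is connected and non-bipartite any two vertices are joined by an even-length walk, so $u_j/w_j$ is globally constant and $\bu\propto\bw$. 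Substituting $\bw\propto\bu$ into the two eigen-equations gives $\rho_0 u_i/\gamma_i=\rho_1\gamma_i u_i$, i.e.\ $\gamma_i^2=\rho_0/\rho_1=1$ for all $i$, hence $\bGamma=\bI$, contradicting the standing assumption. The main obstacle is precisely this equality analysis: converting the scalar identity $\rho_0\rho_1=1$ into the rigidity $\bGamma=\bI$ requires carefully tracking the Cauchy--Schwarz equality conditions and invoking both connectivity and non-bipartiteness to propagate the pointwise ratio constraint to all vertices.
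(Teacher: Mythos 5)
Your proof is correct, and it takes a genuinely different route from the paper's. The paper symmetrizes: it writes $\lambda_{\max}(\bGamma\bW)=\lambda_{\max}(\bD^{-1/2}\bGamma^{1/2}\bA\bGamma^{1/2}\bD^{-1/2})$, plugs the fixed unit test vector with entries $\sqrt{\degree(i)/m}$ into the Rayleigh quotients of the two symmetrized matrices to get $\tfrac{1}{m}\sum_{i,j}a_{i,j}\sqrt{\gamma_i\gamma_j}\le 1$ and $\tfrac{1}{m}\sum_{i,j}a_{i,j}(\gamma_i\gamma_j)^{-1/2}\le 1$ under the hypothesis that both spectral radii are at most $1$, and then derives a contradiction from a single Cauchy--Schwarz applied to these two scalar sums; equality there forces $\gamma_i\gamma_j$ constant on edges, and non-bipartiteness plus $\bGamma\neq\bI$ kills the equality case. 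You instead work directly with the nonsymmetric matrices: you take the positive Perron eigenvectors $\bu,\bw$, form the geometric-mean vector $\bs$, apply Cauchy--Schwarz row by row (where the $\gamma_i$ cancellation is the key observation), and pair with the stationary distribution of $\bW$ to get the multiplicative bound $\lambda_{\max}(\bGamma\bW)\,\lambda_{\max}(\bGamma^{-1}\bW)\ge 1$. Your version proves a sharper, cleaner statement on the actual spectral radii and generalizes beyond this particular $\bGamma$-conjugate pair, at the cost of a heavier equality analysis (propagating the ratio $u_j/w_j$ through even-length walks, which is where you correctly use connectivity and non-bipartiteness, mirroring the paper's even-path argument in the proof of \Cref{thm::eig_does_not_converge}). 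The paper's version is more elementary --- it needs only the variational bound $\bx^{\ltop}\bM\bx\le\lambda_{\max}(\bM)$ for symmetric $\bM$ and one scalar Cauchy--Schwarz --- and its equality case reduces to the simpler rigidity condition $\gamma_i\gamma_j=c$ on edges. Both arguments are sound; the one small thing to make explicit in yours is that equality in the row-$i$ Cauchy--Schwarz requires the row of $\bW$ to be nonzero, which holds since $\degree(i)>0$.
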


\begin{proof}
Let $\bM_{\bzero} = \bD^{-1/2}\bGamma^{1/2}  \bA \bGamma^{1/2} \bD^{-1/2}$ and $\bM_{\bone}   = \bD^{-1/2}\bGamma^{-1/2}  \bA \bGamma^{-1/2} \bD^{-1/2}$. Define $m = \sum_{i} \degree(i)$ and let
\ifeight
$\bx = \left[\sqrt{\frac{\degree(1)}{m}}, \sqrt{\frac{\degree(2)}{m}}, \dots, \sqrt{\frac{\degree(n)}{m}}  \right]^{\ltop}$
\else
\begin{align}
\bx = \left[\sqrt{\frac{\degree(1)}{m}}, \sqrt{\frac{\degree(2)}{m}}, \dots, \sqrt{\frac{\degree(n)}{m}}  \right]^{\ltop}
\end{align}
\fi
so that $\lVert \bx \rVert_2 = 1$.
\begin{align}
\lambda_{\max}(\dermat{\bzero}) &=
\lambda_{\max}(\bGamma \bW) = 
\lambda_{\max}(\bGamma \bD^{-1} \bA) 
\\&= \lambda_{\max}(\bD^{-1/2}\bGamma^{1/2}  \bA \bGamma^{1/2} \bD^{-1/2})
= \lambda_{\max} (\bM_{\bzero})
\end{align}

If $\lambda_{\max} (\bM_{\bzero}) \leq 1$, then 
\ifeight
$\frac{1}{m} \sum_{i,j} a_{i,j} \sqrt{\gamma_i \gamma_j} =
\bx^{\ltop} \bM_{\bzero} \bx  \leq 1$.
\else
\begin{align}
\frac{1}{m} \sum_{i,j} a_{i,j} \sqrt{\gamma_i \gamma_j} =
\bx^{\ltop} \bM_{\bzero} \bx  \leq 1\,.
\end{align} 
\fi
Similarly if  $\lambda_{\max}(\dermat{\bone}) =\lambda_{\max}(\bM_{\bone}) \leq 1$, then
\ifeight
$\frac{1}{m} \sum_{i,j} a_{i,j} \frac{1}{\sqrt{\gamma_i \gamma_j}} =
\bx^{\ltop} \bM_{\bone} \bx  \leq 1$.
\else
\begin{align}
\frac{1}{m} \sum_{i,j} a_{i,j} \frac{1}{\sqrt{\gamma_i \gamma_j}} =
\bx^{\ltop} \bM_{\bone} \bx  \leq 1\,.
\end{align}
\fi
Using Cauchy-Schwarz gives that 
\begin{align}
\Bigg(\frac{1}{m} &\sum_{i,j} a_{i,j} \sqrt{\gamma_i \gamma_j} \Bigg) \Bigg(\frac{1}{m} \sum_{i,j} a_{i,j} \frac{1}{\sqrt{\gamma_i \gamma_j}}\Bigg) \\&\geq \frac{1}{m^2}\left( \sum_{i,j} a_{i,j} \sqrt{\frac{\sqrt{\gamma_i \gamma_j} }{\sqrt{\gamma_i \gamma_j} }}\right)^2 = 1
\label{eq::cs_for_eigen_less_one}
\,.
\end{align}
For equality to hold there needs to be some constant $c$ where 
$
\sqrt{\gamma_i \gamma_j} = c / {\sqrt{\gamma_i \gamma_j}}
$
for all $i$ and $j$. Since the graph is not bipartite, this only holds if $\gamma_1^2 = \dots = \gamma_n^2  = c$ for all $i$ and $j$. In the case where $c > 1$ then $\lambda_{\max}(\bGamma \bW) > 1$; if $c < 1$ then $\lambda_{\max}(\bGamma^{-1} \bW) > 1$; if $c = 1$, then  $\gamma_1 = \dots = \gamma_n  = 1$ which is not allowed by our assumptions. Thus, equality in \eqref{eq::cs_for_eigen_less_one} cannot hold.  
Therefore, the statements $\lambda_{\max} (\bM_{\bzero}) \leq 1$ and $\lambda_{\max} (\bM_{\bone}) \leq 1$ cannot both be true.
\end{proof}

Note that the assumptions $\bGamma \neq \bI$ and that the graph is not bipartite are critical for this lemma. 
If the graph is a path with no self-loops (which is bipartite) where adjacent nodes alternate bias parameters $\gamma$ and $1/\gamma$, then the largest eigenvalues of $\dermat{\bone}$ and $\dermat{\bzero}$ can both be $1$.

\begin{theorem}\label{thm::eig_does_not_converge}
    Let $\bx$ be a boundary equilibrium point  (either $\bzero$ or $\bone$). If $\lambda_{\max}(\dermat{\bx}) > 1$, then
    $   
        \bbP[\bbeta(t) \to \bx] = 0.
    $

    Conversely, if $\lambda_{\max}(\dermat{\bx}) \leq 1$, then
    $   
        \bbP[\bbeta(t) \to \bx] = 1.
    $
\end{theorem}

\begin{proof}
For the first statement, by combining \Cref{lem::meet_conditions} and \Cref{prop::conditions_zero_prob}, we can show that if $\dermat{\bzero}$ has an eigenvalue greater than $1$, the result holds. By symmetry, if $\dermat{\bone}$ has an eigenvalue greater than $1$, the result also holds.

For the second statement, our main method is to show that no interior equilibrium point exists if one of $\dermat{\bzero}$ and $\dermat{\bone}$ has all eigenvalues less than or equal to $1$. 
We assume that $\dermat{\bone}$ has all eigenvalues less than or equal to $1$. (By symmetry, the same proof can be used for the case that $\dermat{\bzero}$ has all eigenvalues less than $1$.)
Let $\lambda = \lambda_{\max}(\dermat{\bone})$ and let $\bv$ be the corresponding eigenvector. Let $v_i$ be the $i$th element of $\bv$.  Scale $\bv$ so that $\bv^T \bone = 1$. We will use the fact $v_i > 0$ (shown by Perron-Frobenius) and $\bv ^T \dermat{\bone}=\lambda\bv ^T  $.

Observe that (as $\frac{1}{\bbE[X]} \leq \bbE[1/X] $ by Jensen's inequality)
\begin{align}
    \sum_{i = 1}^n & v_i\left(\frac{1}{f(\mu_i,\gamma_i)}-1\right) 
    = \sum_{i = 1}^n  \frac{v_i}{\gamma_i}\left(\frac{1}{\mu_i}-1\right)\\
    &\leq
    \sum_{i = 1}^n  \frac{v_i}{\gamma_i \degree(i)}\sum_{j = 1}^n a_{i,j} \left(\frac{1}{\beta_j}-1\right)\label{eq::boundary_urn_proof_convex}
    \,.
\end{align}

    Then
\begin{align}
    \sum_{i = 1}^n & v_i\left(\frac{1}{f(\mu_i,\gamma_i)}-1\right)  
    \leq
    \sum_{j = 1}^n \left(\frac{1}{\beta_j}-1\right) \sum_{i = 1}^n \frac{v_i}{\gamma_i \degree(i)} a_{i,j}\\
    &=\lambda
    \sum_{j = 1}^n v_j\left(\frac{1}{\beta_j}-1\right) \leq \sum_{j = 1}^n v_j\left(\frac{1}{\beta_j}-1\right)
\end{align}

\begin{align}
    \implies \sum_{i = 1}^n \frac{v_i}{f(\mu_i,\gamma_i)} &\leq 
    \sum_{i = 1}^n \frac{v_i}{\beta_i}.
\label{eq::inverse_ineq}
\end{align}
Interior equilibrium points must have that $\beta_i = f(\mu_i,\gamma_i)$ for all $i$. Inequality \eqref{eq::inverse_ineq} is a strict inequality when $\lambda < 1$, in which case there must not exist any interior equilibrium points $\bbeta$.
When $\lambda = 1$, \eqref{eq::inverse_ineq} can only be an equality if \eqref{eq::boundary_urn_proof_convex} is an equality. Since $1/x$ is a strictly convex function, equality in \eqref{eq::boundary_urn_proof_convex} only holds if all $\beta_j$'s are equal for all $j$ which is a neighbor of $i$. Since the graph is not bipartite and connected, this implies that all $\beta_j$ are the same for each $j$. (We can see this since the non-bipartite property implies that there is a path with an even number of nodes connecting any node $i$ to node $j$. The nodes at odd positions in the path will force the pair of two adjacent even position nodes to be the same.)
However, the only way $\bbeta$ can be an equilibrium point with this condition that $\beta_j$'s are all equal 
is if $\bbeta = \bone$ or $\bbeta = \bzero$, or if $\gamma_i = 1$ for all $i$. 
Thus, if $\lambda < 1$ or $\lambda = 1$, the only equilibrium points are $\bone$ and $\bzero$.

Using \Cref{thm::must_converge_equilibrium}, with probability $1$, $\bbeta(t)$ must converge to one of the two boundary equilibrium point. If $\bx$ is such that $\dermat{\bx}$ has all eigenvalues less than or equal to one, then by 
\Cref{lem::only_one_eig_smaller}, the other boundary equilibrium point must have an eigenvalue larger than $1$. Using the first statement of this theorem, $\bbeta(t)$ almost surely cannot converge to this other boundary equilibrium point, and thus  $\bbeta(t)$ converges to $\bx$ with probability $1$.
\end{proof}

\Cref{thm::eig_does_not_converge}  gives the answer to when the opinion dynamics converges to consensus or not. The only property that needs to be checked are the eigenvalues of $\dermat{\bzero}$ and $\dermat{\bone}$. 
\ifeight
\else
If the eigenvalues of either are all less than $1$ (or equal to $1$ but with the necessary assumptions in our model), then consensus happens with probability $1$. If the eigenvalues of $\dermat{\bzero}$ and $\dermat{\bone}$ both have a value greater than $1$, then \Cref{thm::eig_does_not_converge} shows that consensus does not occur with probability $1$. 

\fi
We remark that if all agents' bias parameters are greater than $1$ ($\bGamma^{-1}$ has all values less than $1$), we immediately get that $\dermat{\bone}$ has largest eigenvalue less than $1$. Thus, as expected, all agents converge to declaring opinion $1$ with probability $1$. 
In \cite{socialPressure2021}, the threshold for approaching consensus is computed when the network is the complete graph, where proportion $\Phi$ of the agents have bias parameter $\gamma > 1$ and proportion $1 - \Phi$ of the agents have bias parameter $1/\gamma$. Consensus occurs if and only if 
\ifeight
$\gamma \leq \max\left\{\frac{1 - \Phi}{\Phi}, \frac{\Phi}{1-\Phi} \right\}$.
\else
\begin{align}
\gamma \leq \max\left\{\frac{1 - \Phi}{\Phi}, \frac{\Phi}{1-\Phi} \right\}\,.
\label{eq::complete_graph_threshold}
\end{align}
\fi
This is consistent with \Cref{thm::eig_does_not_converge}. 
\ifeight
\else
If we compute the eigenvalues of $\dermat{\bzero}$ and $\dermat{\bone}$, the largest eigenvalue is greater than $1$ for the complete graph exactly at the threshold
\eqref{eq::complete_graph_threshold}.
If $\frac{\Phi}{1 - \Phi} < \gamma <\frac{1 - \Phi}{\Phi}$, then $\dermat{\bzero}$ has all eigenvalues less than $1$ and $\dermat{\bone}$ has an eigenvalue greater than $1$. Here, $\bbeta(t) \to \bzero$ almost surely, which is again consistent with \cite{socialPressure2021}.
\fi

\ifeight 

\else 

Next we examine when consensus fails to occur; this is related to a similar condition on the eigenvalues of the Jacobian matrix for interior equilibria.

\begin{proposition}\label{prop::pos_prob_interior}
For an interior equilibrium point $\bx$, 
if all the eigenvalues of the Jacobian matrix $\frac{\partial}{\partial\bbeta}F(\bbeta,\bgamma)|_{\bbeta = \bx}$ 
are less than $1$, then 
\begin{align}
    \bbP[\bbeta(t) \to \bx] > 0.
\end{align}

If the Jacobian matrix for interior equilibrium point $\bx$ has an eigenvalue greater than $1$, then
    \begin{align}
        \bbP[\bbeta(t) \to \bx] = 0\,.
    \end{align}
\end{proposition}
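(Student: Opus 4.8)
The plan is to treat the full stochastic dynamics as a Robbins--Monro stochastic approximation with the vanishing step size $\frac{1}{t+1}$, writing $\bbeta(t+1)-\bbeta(t) = \frac{1}{t+1}\big(g(\bbeta(t)) + \bU(t+1)\big)$ with drift $g(\bbeta) \eqdef F(\bbeta,\bgamma) - \bbeta$ and $\bU(t+1)$ a bounded martingale difference satisfying $\bbE[\bU(t+1)\mid\cF_t]=\bzero$. The associated mean ODE is $\dot\bbeta = g(\bbeta)$, whose linearization at the equilibrium $\bx$ is the matrix $\dermat{\bx}-\bI$. Since $\dermat{\bx}$ has only real eigenvalues (\Cref{lem::dermat_real_eigenvalues}), the eigenvalues of $\dermat{\bx}-\bI$ are exactly $\{\lambda_k-1\}$, where $\lambda_k$ are the eigenvalues of $\dermat{\bx}$. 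Hence ``all eigenvalues of the Jacobian are $<1$'' is precisely linear stability ($\dermat{\bx}-\bI$ is Hurwitz), and ``an eigenvalue $>1$'' is precisely linear instability.

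For the second (probability-zero) claim I would mirror the argument of \Cref{lem::meet_conditions}. Let $\lambda=\lambda_{\max}(\dermat{\bx})>1$ and let $\bv$ be the corresponding nonnegative left eigenvector, which exists by Perron--Frobenius because $\dermat{\bx}=\diag(\cdots)\bW$ is nonnegative and irreducible. Consider the projection $W(t)=\bv^{\ltop}(\bbeta(t)-\bx)$ onto the unstable direction; near $\bx$ its conditional drift is $\frac{1}{t+1}\bv^{\ltop} g(\bbeta(t))\approx \frac{\lambda-1}{t+1}W(t)$, so the process is repelled from $\bx$ in this coordinate. The extra input relative to the boundary case is that the noise genuinely excites the unstable direction: because $\bx$ is interior, each $x_i\in(0,1)$, so $\Var[U_i(t+1)\mid\cF_t]=f(\mu_i,\gamma_i)\big(1-f(\mu_i,\gamma_i)\big)$ is bounded below by a positive constant on a neighborhood of $\bx$, giving $\Var[W(t+1)-W(t)\mid\cF_t]\gtrsim (t+1)^{-2}$. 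Applying the non-convergence-to-unstable-equilibria machinery underlying \Cref{prop::conditions_zero_prob} (now in two-sided form, since $W$ may change sign) to the repelled, nondegenerate coordinate $W(t)$ yields $\bbP[W(t)\to 0]=0$, and therefore $\bbP[\bbeta(t)\to\bx]=0$.

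For the first (positive-probability) claim I would invoke attainability of a linearly stable equilibrium. Since $\dermat{\bx}-\bI$ is Hurwitz, solve the Lyapunov equation $(\dermat{\bx}-\bI)^{\ltop}P + P(\dermat{\bx}-\bI)=-\bI$ for a positive-definite $P$, giving the local strict Lyapunov function $V(\bbeta)=(\bbeta-\bx)^{\ltop}P(\bbeta-\bx)$ with $\langle g(\bbeta),\nabla V(\bbeta)\rangle<0$ on a punctured neighborhood of $\bx$. Two ingredients then give positive probability of convergence: (i) \emph{reachability} --- with positive probability $\bbeta(t)$ enters a sublevel set $\{V<r\}$ inside the basin of $\bx$, which uses that the trajectory stays in the open interior with nondegenerate, full-support noise (cf.\ \Cref{lem::infinitely_many_declare}), so every open subset of the interior is visited with positive probability; and (ii) \emph{confinement} --- once inside a small enough sublevel set, the supermartingale-like behavior of $V(\bbeta(t))$ together with the square-summable step sizes $\frac{1}{t+1}$ traps the trajectory and forces $\bbeta(t)\to\bx$ with positive probability. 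This is the standard attainability conclusion of stochastic approximation theory (see, e.g., \cite{arthur1986}).

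I expect the main obstacle to be the positive-probability claim, specifically the confinement step (ii): one must show that with positive probability the accumulated noise never ejects the trajectory from the basin, which requires a careful trapping estimate bounding $\bbP\big[\sup\|\sum_{s\geq t}\frac{1}{s+1}\bU(s+1)\|\ \text{exceeds the Lyapunov margin}\big]$, rather than the routine verification that $\bx$ is linearly stable. The reachability step (i) is comparatively soft given \Cref{lem::infinitely_many_declare} and the non-degeneracy of the Bernoulli noise in the interior, and the probability-zero claim is essentially the interior analogue of \Cref{lem::meet_conditions}, differing only in the two-sided nature of the projected coordinate $W(t)$.
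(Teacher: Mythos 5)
Your overall strategy matches the paper's at the top level: both reduce the question to linear stability of $\dermat{\bx}-\bI$ (using that $\dermat{\bx}$ has real eigenvalues) and then appeal to stochastic-approximation results for attainability of stable interior points and non-convergence to unstable ones. For the positive-probability claim the paper simply verifies the quadratic-form stability condition $\langle \bC(\bbeta - F(\bbeta,\bgamma)), \bbeta-\bx\rangle > 0$ near $\bx$ with $\bC=\bI$, via a first-order expansion $F(\bbeta,\bgamma)-\bbeta = (\dermat{\bx}-\bI+R(\bbeta,\bx))(\bbeta-\bx)$, and then cites \cite[Theorem~5.1]{arthur1986} (\Cref{thm::local_stability_arthur}). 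Your reachability-plus-confinement programme is essentially a from-scratch reproof of that attainability theorem; since you ultimately point back to \cite{arthur1986} for it, this part is acceptable, just more labor than the paper expends.

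The genuine gap is in your probability-zero argument. You propose to ``mirror \Cref{lem::meet_conditions}'' and feed the projected coordinate $W(t)=\bv^{\ltop}(\bbeta(t)-\bx)$ into a ``two-sided form'' of \Cref{prop::conditions_zero_prob}. That machinery cannot be applied here. Condition (b) of \Cref{prop::conditions_zero_prob} requires $\Var[X(t+1)-X(t)\mid\cF_t]\le c_1 X(t)/(t+1)^2$, i.e.\ the noise must \emph{degenerate} as $X(t)\to 0$. At a boundary equilibrium this holds because $\Var[U_i(t+1)\mid\cF_t]\le f(\mu_i(t),\gamma_i)$ vanishes linearly in $\mu_i(t)$, which is exactly how \Cref{lem::meet_conditions} verifies (b). At an interior equilibrium the situation is the opposite --- as you yourself observe, the conditional variance of $W(t+1)-W(t)$ is bounded \emph{below} by a constant times $(t+1)^{-2}$ even when $|W(t)|$ is arbitrarily small --- so condition (b) fails for $|W(t)|$, and the one-sided Renlund lemma does not extend to this setting by a sign-handling tweak; the mechanism preventing convergence is different (persistent noise excitation off the stable manifold, rather than a favorable drift with fluctuations controlled by the current value). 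The paper instead invokes \cite[Theorem~5.2]{arthur1986} (\Cref{thm::not_stable_interior_arthur}), a non-convergence theorem tailored to interior unstable points, and only needs to verify the instability quadratic form with $\bC=\bv\bv^{\ltop}$ together with a H\"older condition on $F$. To close your gap you would need to cite such a theorem (whose hypotheses include precisely the noise non-degeneracy you identified) or prove a two-sided non-convergence lemma from scratch, which is substantially harder than adapting \Cref{prop::conditions_zero_prob}.
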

The proof is given in \Cref{sec::local_stability}.

\Cref{thm::eig_does_not_converge} and \Cref{prop::pos_prob_interior} together  
determine which values declared opinions converge to. The key is to check whether the Jacobian matrix at the equilibrium point has any eigenvalue greater than $1$. If so, the dynamics almost surely do not converge to that equilibrium point; 
otherwise, the dynamics can converge to the equilibrium point. 
\fi 

\ifeight
\else
\Cref{thm::eig_does_not_converge} shows that finding the eigenvalues of the Jacobian matrix is a necessary and sufficient condition for determining consensus. We can then use this information to find what properties of the network and the agents' bias parameters lead to consensus.
\fi

\ifeight 

\else 

\section{Community Network Example}

\label{sec::community_network}


In this section, we apply our results to get explicit results for the \emph{simplified community network}, which is a two-agent network simulating the interaction of two communities.

The simplified community network has two vertices, agent $a$ and agent $b$.
Agent $a$ has bias parameter $\gamma$ where $\gamma > 1$ and agent $b$ has bias parameter $1/\gamma$.
The transition matrix for the edge weights between the two agents is given by 
\begin{align}
\bW = \begin{bmatrix}
    p_1 & 1 - p_1 \\
    1 - p_2 & p_2 \\
\end{bmatrix}
\end{align}
where $p_1,p_2 \in [0,1]$ and $p_1$ represents the proportion of in-community edges for agent $a$ and $p_2$ represents the proportion of in-community edges for agent $b$. (See \Cref{fig::simplfied_community} for a diagram.) The property that the agents have more in-community edges occurs when $p_1 > 1/2$ and $p_2 > 1/2$.

\begin{figure}
    \centering
    \includegraphics[scale = .25,trim={10 10 10 10}]{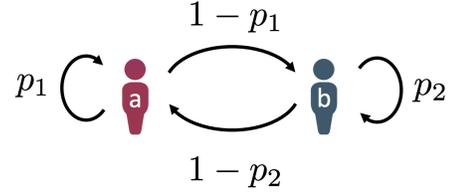}
    \caption{The simplified community network used to study community structure. Agent $a$ has bias parameter $\gamma$ and agent $b$ has bias parameter $1/\gamma$. \vspace{-2em}}
    \label{fig::simplfied_community}
\end{figure}
%


To analyze this network, we first find the equilibrium points. 

\begin{proposition}\label{prop::2agent_interior_eq_point}
The equilibrium points of the simplified community network 
are: $\bzero = (0, 0)$; $\bone = (1, 1)$; and, when $\max\{ \frac{p_1}{p_2},\frac{p_2}{p_1}\} < \gamma $, the interior equilibrium point $\bbeta^* = (\beta_a^*, \beta_b^*)$ where 
\begin{align}
    \beta_a^* &= \frac{\gamma \left((\gamma + 1) p_1 p_2 - 2 p_2 + \sqrt{p_1 p_2 }\Delta\right) }{(\gamma - 1) ((\gamma + 1) p_1 p_2 + \sqrt{p_1 p_2 }\Delta) }
    \\ \beta_b^* &= \frac{2 \gamma p_1 - (\gamma+1)p_1 p_2 - \sqrt{p_1 p_2} \Delta}{(\gamma - 1) ((\gamma + 1) p_1 p_2 + \sqrt{p_1 p_2 }\Delta) }
\end{align}
where
$
     \Delta  = \sqrt{4 \gamma (1-p_1 - p_2) + (\gamma +1)^2 p_1 p_2}\,.
$
\end{proposition}

\ifsixteen
    \begin{proof}
    We solve the equation in \Cref{prop::equilibrium_beta}. Then we check which of these solutions have $\beta_a$ and $\beta_b$ in $[0,1]$.
    \end{proof}
\else
    The calculations needed for \Cref{prop::2agent_interior_eq_point} 
    are shown in \Cref{sec::proof_2agent_interior_eq_point}.   
\fi
%
To apply \Cref{thm::lyapunov_monotone}, we need the underlying adjacency matrix $\bA = \bD \bW$ to be symmetric. 
We choose 
\begin{align}
    \bD = \begin{bmatrix}
        (1-p_2) & 0 \\
        0 & (1-p_1)
    \end{bmatrix}
\end{align}
which results in a symmetric $\bA$ (with the weight of the edge between $a,b$ set to $(1-p_1)(1-p_2)$ and the self-loops weights at $p_1 (1-p_2)$ and $p_2 (1-p_1)$, respectively). 
Then we apply \Cref{thm::lyapunov_monotone} as desired to show that asymptotically the dynamics on the simplified community network almost surely converges to one of the equilibrium points.
Next we determine under what conditions the dynamics asymptotically approaches consensus.

\begin{proposition}\label{prop::simplified_community_consensus}
For the simplified community network, 
\begin{align}
\lim_{t \to \infty} \bbeta(t) 
= 
\begin{cases}
\bbeta^*    & \text{ if } \max\{ \frac{p_1}{p_2},\frac{p_2}{p_1}\} < \gamma 
\\
\bone      & \text{ if }  \gamma \leq \frac{p_1}{p_2} \\ 
\bzero       & \text{ if } \gamma \leq\frac{p_2}{p_1} 
\end{cases}
\end{align}
almost surely where $\bbeta^*$ is given by \Cref{prop::2agent_interior_eq_point}.
\end{proposition}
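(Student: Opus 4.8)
The plan is to combine the global convergence guarantee of \Cref{thm::must_converge_equilibrium} with the local stability test of \Cref{thm::eig_does_not_converge}, using the complete enumeration of equilibria supplied by \Cref{prop::2agent_interior_eq_point}. By \Cref{thm::must_converge_equilibrium}, $\bbeta(t)$ converges almost surely to an equilibrium point, and by \Cref{prop::2agent_interior_eq_point} the only candidates are $\bzero$, $\bone$, and (precisely when $\max\{p_1/p_2,\,p_2/p_1\} < \gamma$) the interior point $\bbeta^*$. The argument is therefore one of \emph{elimination}: in each regime I would rule out the boundary equilibria whose boundary Jacobian has an eigenvalue exceeding $1$ via \Cref{thm::eig_does_not_converge}, so that the claimed limit is the unique surviving equilibrium.

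The computational core is to evaluate the characteristic polynomials of the two boundary Jacobians at $\lambda = 1$. Since agent $a$ has bias $\gamma$ and agent $b$ has bias $1/\gamma$, we have $\bGamma = \diag(\gamma, 1/\gamma)$, and $\dermat{\bzero} = \bGamma\bW$, $\dermat{\bone} = \bGamma^{-1}\bW$ are explicit $2\times 2$ matrices. Both have trace-determinant data $\det = p_1 + p_2 - 1$, and a short computation gives the characteristic polynomials $g_{\bzero}, g_{\bone}$ evaluated at $\lambda = 1$ as
\begin{align}
g_{\bzero}(1) &= (1-\gamma)\left(p_1 - \frac{p_2}{\gamma}\right), & g_{\bone}(1) &= \frac{\gamma-1}{\gamma}\left(p_1 - \gamma p_2\right).
\end{align}
Because $\dermat{\bx}$ has all real eigenvalues (noted after \eqref{eq::dermat_def}), each $g$ is an upward-opening quadratic with real roots; hence $g(1) < 0$ together with $g(\lambda)\to+\infty$ forces a root strictly greater than $1$, i.e. $\lambda_{\max}(\dermat{\bx}) > 1$. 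This is the only direction of the sign test I actually need.

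With these two expressions the three cases follow quickly. In the regime $\gamma > \max\{p_1/p_2,\,p_2/p_1\}$, both $\gamma p_1 > p_2$ and $\gamma p_2 > p_1$ hold, so $g_{\bzero}(1)<0$ and $g_{\bone}(1)<0$; thus $\lambda_{\max}(\dermat{\bzero})>1$ and $\lambda_{\max}(\dermat{\bone})>1$, and by \Cref{thm::eig_does_not_converge} both boundary points have convergence probability $0$, leaving only $\bbeta^*$, so $\bbeta(t)\to\bbeta^*$ almost surely. In the regime $\gamma \le p_1/p_2$, the hypothesis $\gamma>1$ forces $p_1 > p_2$, whence $p_1 - p_2/\gamma > 0$ and $g_{\bzero}(1)<0$, so $\bzero$ is ruled out; since no interior equilibrium exists here, the must-converge property leaves only $\bone$. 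The regime $\gamma \le p_2/p_1$ is symmetric under swapping $(p_1,p_2)$ and $(\bzero,\bone)$, giving $\bbeta(t)\to\bzero$.

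The only delicate point, and the step I expect to require the most care, is the behavior on the boundaries between regimes (e.g. $\gamma = p_1/p_2$), where $g(1)=0$ rather than $g(1)<0$. There one must confirm that the surviving eigenvalue does not exceed $1$: since $\lambda=1$ is then a root and the product of roots equals $p_1 + p_2 - 1 \le 1$, the remaining root is at most $1$, so $\lambda_{\max} \le 1$ and the converse half of \Cref{thm::eig_does_not_converge} delivers almost-sure convergence to the boundary point, keeping the three cases mutually consistent at their shared boundaries. Note that I never need to verify stability of the \emph{surviving} boundary point directly in the strict cases: the elimination argument, combined with \Cref{thm::must_converge_equilibrium} and the finiteness of the equilibrium set, pins down the limit.
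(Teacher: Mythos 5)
Your proposal is correct and follows essentially the same route as the paper: both reduce the question to the sign of $\lambda_{\max}(\dermat{\bzero})-1$ and $\lambda_{\max}(\dermat{\bone})-1$ and then invoke \Cref{thm::must_converge_equilibrium}, \Cref{thm::eig_does_not_converge}, and the equilibrium enumeration of \Cref{prop::2agent_interior_eq_point}. The only difference is cosmetic: the paper writes out $\lambda_{+}$ explicitly via the quadratic formula and squares the resulting inequality, whereas you evaluate the characteristic polynomial at $\lambda=1$ and read off the sign, which is an equivalent (and arguably cleaner, especially at the boundary cases $\gamma=p_1/p_2$) way to perform the same test.
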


\ifsixteen
    \begin{proof}
    We compute the eigenvalues of 
    \begin{align}
        \dermat{\bzero} &= \begin{bmatrix}
       \gamma  p_1  & \gamma(1-p_1) \\ 
        \frac{1}{\gamma}(1-p_2)  & \frac{1}{\gamma}(p_2)
        \end{bmatrix}\,.
    \end{align}
    The larger eigenvalue is given by
\begin{align}
    \lambda_{+} = \frac{1}{2}\left(\gamma p_1 + \frac{p_2}{\gamma} \right) + \sqrt{\frac{1}{4}\left(\gamma p_1 + \frac{p_2}{\gamma} \right)^2 + 1 - p_1 - p_2}\,.
\end{align}
    and $\lambda_{+}< 1$ exactly when $\gamma \leq \frac{p_2}{p_1}$.
    \end{proof}
\else
    The proof is given in \Cref{sec::proof_simplified_community_consensus}. 
    
\fi
Note that these convergence results intuitively make sense, since if $p_2 < p_1$, then agent $a$ reinforces her own opinion more by having more connections to herself. As agent $a$ has a bias towards opinion $1$, we expect $\beta_a(t)$
to be larger than $1/2$. If the proportion of edges agent $a$ has to herself compared to the proportion agent $b$ has to herself 
is much greater than $\gamma$, 
then agent $b$ will be overwhelmed by the pressure to conform. 

By \Cref{thm::eig_does_not_converge}, when the conditions $\gamma \leq \frac{p_1}{p_2}$ or  $\gamma \leq \frac{p_2}{p_1}$ do not hold, there are in fact no interior equilibrium points. This matches the conclusion of \Cref{prop::2agent_interior_eq_point}. Also, setting $p_2 = 1-p_1$ creates a system which behaves like the complete graph studied in \cite{socialPressure2021}. The threshold for consensus given by \Cref{prop::simplified_community_consensus} is consistent with that given in \cite{socialPressure2021}.


\fi 

\section{Conclusion}

\label{sec::conclusion}

In this work, we studied the interacting P\'{o}lya urn model of opinion dynamics under social pressure. 
We expanded upon \cite{socialPressure2021} by showing results for arbitrary networks and general bias parameters.
To show that the probability of declared opinions \revisered{almost surely approach equilibria, we used an appropriate Lyapunov function and applied stochastic approximation}.
We also gave easily-computable necessary and sufficient conditions, 
for when the dynamics approach consensus. 
Our results provide insight as to how and when social pressure can force conformity of (expressed) opinions even against the true beliefs of some individuals.

The convergence and consensus results developed in this work have potential applications beyond this opinion dynamics model. These results may also apply to other social dynamics similar to the interacting P\'{o}lya urn model with non-linear interaction functions. A possible direction for further work is to find what consequences our techniques have for other models. Finding the interior equilibrium points, \revise{and determining if they are isolated or unique,} for arbitrary networks is also an area for future work.

\bibliographystyle{resources/IEEEbib}
\bibliography{ref_social}

\ifeight
\else
\appendix
\fi


\ifeight 
\else 

\subsection{Local Stability}
\label{sec::local_stability}

Using intuition from Lyapunov's indirect method \cite{khalil2002nonlinear} (which is generally for continuous time problems) to our discrete problem, we would expect that
vector $\bx$ is locally stable if $\dermat{\bx}$ (defined in \eqref{eq::dermat_def}) has all eigenvalues with real parts less than $1$, or equivalently, that $\dermat{\bx}  - \bI$ is a Hurwitz matrix.
This intuition turns out to be correct, as shown by the statement of \Cref{prop::pos_prob_interior}. 


However, before proving \Cref{prop::pos_prob_interior}, we first show that the eigenvalues of \dermat{\bx} are in fact all real.

\begin{lemma}\label{lem::dermat_real_eigenvalues}
$\dermat{\bx}$ has all real eigenvalues for any $\bx \in [0,1]^n$. 
\end{lemma}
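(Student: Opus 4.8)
The plan is to show that $\dermat{\bx}$, though not symmetric in general, is \emph{similar} to a real symmetric matrix; since similar matrices share a characteristic polynomial and real symmetric matrices have only real eigenvalues, this forces all eigenvalues of $\dermat{\bx}$ to be real. The structural observation driving everything is read directly off \eqref{eq::dermat_def}: $\dermat{\bx}$ is the product of a diagonal matrix whose entries are $\frac{\gamma_i}{(1+(\gamma_i-1)\mu_i)^2}$ and the normalized adjacency matrix $\bW = \bD^{-1}\bA$, and $\bA$ is symmetric because the graph is undirected.

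First I would check that the leading diagonal factor is strictly positive. Writing $\mu_i = \frac{1}{\degree(i)}\sum_{j} a_{i,j} x_j \in [0,1]$ (each $x_j \in [0,1]$ and $\bW$ is row-stochastic), the denominator satisfies $1+(\gamma_i-1)\mu_i = (1-\mu_i) + \gamma_i\mu_i \geq \min\{1,\gamma_i\} > 0$ since $\gamma_i>0$, so each diagonal entry is strictly positive. Folding the diagonal matrix $\bD^{-1}$ coming from $\bW$ into this factor, I can write $\dermat{\bx} = \mathbf{P}\bA$, where $\mathbf{P}$ is a diagonal matrix with strictly positive diagonal and $\bA = \bA^{\ltop}$.

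The key step is then the diagonal similarity. Because $\mathbf{P}$ is positive diagonal, its square root $\mathbf{P}^{1/2}$ is a real, invertible diagonal matrix, and
\begin{align}
\mathbf{P}^{-1/2}\,\dermat{\bx}\,\mathbf{P}^{1/2} = \mathbf{P}^{-1/2}\mathbf{P}\bA\,\mathbf{P}^{1/2} = \mathbf{P}^{1/2}\bA\,\mathbf{P}^{1/2}.
\end{align}
The right-hand side is symmetric, since $(\mathbf{P}^{1/2}\bA\,\mathbf{P}^{1/2})^{\ltop} = \mathbf{P}^{1/2}\bA^{\ltop}\mathbf{P}^{1/2} = \mathbf{P}^{1/2}\bA\,\mathbf{P}^{1/2}$ using $\bA^{\ltop}=\bA$. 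Thus $\dermat{\bx}$ is similar to a real symmetric matrix and has only real eigenvalues.

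I do not expect a genuine obstacle here: the argument is elementary, and the only points requiring care are that $\bA$ must be symmetric (guaranteed by the undirected-graph assumption) and that $\mathbf{P}$ must be strictly positive so that $\mathbf{P}^{1/2}$ is real and the similarity is valid (guaranteed by the positivity check above). I would also note that this is exactly the general form of the symmetrization already used in \Cref{lem::only_one_eig_smaller}, where $\bGamma\bW$ was rewritten as $\bD^{-1/2}\bGamma^{1/2}\bA\,\bGamma^{1/2}\bD^{-1/2}$; the present lemma simply replaces the constant factor $\bGamma$ by the $\bx$-dependent positive diagonal matrix, so the same congruence/similarity trick applies verbatim.
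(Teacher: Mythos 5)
Your proposal is correct and follows essentially the same route as the paper's own proof: both write $\dermat{\bx}=\mathbf{P}\bA$ with $\mathbf{P}$ a positive diagonal matrix absorbing $\bD^{-1}$ and the derivative factors, and then observe that $\mathbf{P}^{-1/2}\dermat{\bx}\mathbf{P}^{1/2}=\mathbf{P}^{1/2}\bA\mathbf{P}^{1/2}$ is symmetric, so the eigenvalues are real. Your explicit check that the diagonal entries are strictly positive (via $1+(\gamma_i-1)\mu_i\geq\min\{1,\gamma_i\}>0$) is a detail the paper asserts without proof, but the argument is otherwise identical.
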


\begin{proof}
    Using \eqref{eq::dermat_def}, we write that
    \begin{align}
        \dermat{\bx} = \bB \bA
    \end{align}
    where $\bB$ is a diagonal matrix with positive values on the diagonal and zeros elsewhere. Matrix $\bA$ is the adjacency matrix of the (undirected) graph and hence is symmetric.

    The matrix $\bB ^{1/2}$ is well-defined since $\bB$ only has positive entries on the diagonal and the matrix  $\bB^{1/2} \bA \bB ^{1/2}$ is symmetric, which means it has only real eigenvalues. Matrix $\dermat{\bx}$ is similar to 
    \begin{align}
    \bB^{-1/2}\dermat{\bx} \bB ^{1/2} = \bB^{1/2} \bA \bB ^{1/2}
    \end{align}
    and similar matrices have the same eigenvalues. 
    %
\end{proof}

We note also that $\dermat{\bx}$ is similar to $\dermat{\bx}^{\ltop}$, so $\dermat{\bx}$ has the same eigenvalues as $(\dermat{\bx} + \dermat{\bx}^{\ltop})/2$.

\begin{proof}[Proof of \Cref{prop::pos_prob_interior}]
    For the first statement, we need to show that for interior equilibrium point $\bx$, if the eigenvalues of \dermat{\bx} are less than $1$, there is some probability the opinions converge to $\bx$.
    Let $\lambda =\lambda_{\max}(\dermat{\bx})$.
    This proof primarily uses \Cref{thm::local_stability_arthur}. 
    In order to apply this, 
    set 
$\bC = \bI$
which is positive-definite.
Then we have that
\begin{align}
    F(\bbeta, \bgamma) - F(\bx , \bgamma) = (\dermat{\bx}  + R(\bbeta,\bx)) (\bbeta - \bx)
\end{align}
where $R(\bbeta,\bx) \to 0 $ as $\bbeta \to \bx$.
Since $F(\bx, \bgamma) = \bx$,
\begin{align}
 F(\bbeta, \bgamma) - \bbeta &= (\dermat{\bx} -\bI + R(\bbeta,\bx))(\bbeta - \bx)
\end{align}
which implies
\begin{align}
(\bbeta - \bx)^{\ltop} &(F(\bbeta, \bgamma) - \bbeta) 
\\&= (\bbeta - \bx)^{\ltop}(\dermat{\bx} -\bI)(\bbeta - \bx) 
\eqlinebreakshort
+ (\bbeta - \bx)^{\ltop}(R(\bbeta,\bx))(\bbeta - \bx)
\\ &\leq (\lambda - 1) \norm{ \bbeta - \bx}^2 + \norm{R(\bbeta,\bx)} \norm {\bbeta - \bx}^2
\end{align}

We can then choose $\norm {\bbeta - \bx}$ small enough so that $\norm{R(\bbeta,\bx) } < 1 - \lambda$. This implies that 
\begin{align}
(\bbeta - \bx)^{\ltop} (F(\bbeta, \bgamma) - \bbeta) < 0
\end{align}

This then gives that there exists a neighborhood $ U$ around $\bx$ so that for all $\bbeta$
\begin{align}
    \langle & \bC( \bbeta - F(\bbeta, \bgamma)), \bbeta - \bx \rangle >  0 
\end{align}
and thus \Cref{thm::local_stability_arthur} shows that $\bbeta(t)$ converges to $\bx$ with some positive probability. 

For the second statement, we apply 
\Cref{thm::not_stable_interior_arthur}.
%
All we need to do is check that for interior equilibrium point $\bx$,
\begin{enumerate}
\item 
\label{item::unstable}
There is a symmetric positive definite matrix $\bC$ such that for any $\bbeta \neq \bx$ in a neighborhood $U$ of $\bx$ we have
\begin{align}
\langle \bC(\bbeta - F(\bbeta, \bgamma)), \bbeta - \bx \rangle < 0 \,.
\end{align}
\item \label{item::holder}
and that for all $\bbeta$ in a neighborhood $U$ of $\bx$, 
\begin{align}
\norm{F(\bbeta, \bgamma) - F(\bx, \bgamma)} \leq c \norm{\bbeta - \bx}^\alpha 
\end{align}
for some constant $c$ and some $\alpha \in (0, 1]$. 
\end{enumerate}
For \ref{item::unstable}), 
let $\bv$ be the corresponding eigenvector to $\lambda = \lambda_{\max}(\dermat{\bx})$. Since $\dermat{\bx}$ is irreducible and nonnegative, Perron-Frobenius gives that $\bv$ is positive. Choose $\bC = \bv \bv^{\ltop}$. Then
\begin{align}
\bC (\dermat{\bx} - \bI) =  \bv \bv^{\ltop} (\dermat{\bx} - \bI) = (\lambda - 1) \bv \bv^{\ltop} 
\end{align}
which implies
\begin{align}
(\bbeta - \bx)^{\ltop}&  \bC (\dermat{\bx} - \bI) (\bbeta - \bx) 
\\&= (\lambda - 1)(\bbeta - \bx)^{\ltop} \bv \bv^{\ltop} (\bbeta - \bx) 
\\ & = (\lambda - 1) \langle \bbeta - \bx, \bv \rangle ^2 > 0
\end{align}
and thus \ref{item::unstable}) holds.

For \ref{item::holder}), $F(\cdot, \bgamma)$ is a continuous, twice differentiable and nonnegative function on a convex and compact domain. Let $C$ be the maximum magnitude of the gradient of $F(\bx, \bgamma)$ in any direction. The condition holds for $\alpha = 1$ and $c = C\sqrt{n}$.

This shows that the dynamics cannot converge to this interior point.
\end{proof}
\fi 

\ifsixteen

\else
\subsection{Proofs and Discussion for the Simplified Community Network}

\subsubsection{Proof of \Cref{prop::2agent_interior_eq_point}}

\label{sec::proof_2agent_interior_eq_point}

\begin{proof}
To find the equilibrium points, we solve a set of equations with two variables. We work with $\mu_a$ and $\mu_b$ (see \eqref{eq::mu_vt}).
\begin{align}
    \mu_a &= p_1 f(\mu_a, \gamma) + (1-p_1)  f(\mu_b, 1/\gamma) \\
    &= p_1 \frac{\gamma \mu_a }{1 + (\gamma - 1) \mu_a} + (1-p_1) \frac{\mu_b}{\gamma - (\gamma - 1)\mu_b} \label{eq::comm_mua_evolve}\\
    \mu_b &= (1-p_2) f(\mu_a,\gamma) + p_2  f(\mu_b, 1/\gamma) \\
    &= (1-p_2) \frac{\gamma \mu_a }{1 + (\gamma - 1) \mu_a} + p_2 \frac{\mu_b}{\gamma - (\gamma - 1)\mu_b}\label{eq::comm_mub_evolve}\,.
\end{align}

Substituting $u_a$ and $u_b$ for one another to gives
\begin{align}
    \mu_b = \frac{1 - p_1 - p_2}{1 - p_1} \frac{(\gamma - 1) \mu_a (1-\mu_a)}{1+ (\gamma - 1) \mu_a} + \mu_a
\end{align}
and
\begin{align}
    \label{eq::2agent_eq_quadratic}
   &\mu_a(1-\mu_a)\bigg((\gamma - 1)^2 \mu_a^2 p_2 + \mu_a\big(2(\gamma - 1)p_2 -(\gamma^2 - 1) p_1p_2\big) 
   \eqlinebreakshort
   -\gamma (1-p_1) p_1 + p_2 - p_1 p_2\bigg)  = 0\,.
\end{align}
Equation \eqref{eq::2agent_eq_quadratic} has the solutions
$\mu_a = 0, \mu_a = 1$ and $\nu_{-}$ and $\nu_{+}$ where
\begin{align}
    \Delta & = \sqrt{4 \gamma (1-p_1 - p_2) + (\gamma +1)^2 p_1 p_2}
    \\ \nu_{-} & = \frac{(\gamma + 1) p_1 - 2 - \sqrt{\frac{p_1}{p_2}}\Delta}{2 (\gamma - 1)}
    \\ \nu_{+} & = \frac{(\gamma + 1) p_1 - 2 + \sqrt{\frac{p_1}{p_2}}\Delta}{2 (\gamma - 1)}
    \,.
\end{align}

The values of $\nu_{-}$ and $\nu_{+}$ need to be between $0$ and $1$ in order to be equilibrium points.

We first check these conditions for $\nu_{-}$.
To show that $\nu_{-} \geq 0$, we need to check that
\begin{align}
    (\gamma + 1) p_1 - 2 \geq \sqrt{\frac{p_1}{p_2}} \Delta \label{eq::solve_muastarminus}
\end{align}
(It is important to note that $\gamma >1$. The above and other inequalities can be different if $\gamma < 1$.)

We can solve for conditions on $\gamma$ by squaring the equation, but this requires that we meet the requirement that
\begin{align}
    \label{eq::2agent_b_pos_condition}
    (\gamma + 1) p_1 - 2 > 0 \implies \gamma \geq \frac{2}{p_1} - 1\,.
\end{align}
Squaring \eqref{eq::solve_muastarminus} gives an expression which is equivalent to 
\begin{align}
    \gamma \leq \frac{p_2}{p_1}.
\end{align}
However, this means we need to meet the condition that 
\begin{align}
\gamma \leq \frac{p_2}{p_1} < \frac{2}{p_1} - 1 \leq \gamma
\end{align}
which is not possible (since $p_1$ and $p_2$ are in $[0,1]$). Thus, $\nu_{-}$ is always less than $0$ and we do not need to consider it as a possible equilibrium point. 

Next, we check the conditions for when $\nu_{+}$ is between $0$ and $1$. We start by checking when $\nu_{+} \geq 0$, which is when
\begin{align}
    (\gamma + 1) p_1 - 2 \geq - \sqrt{\frac{p_1}{p_2}} \Delta \label{eq::solve_muastarminus_pos}
    \,.
\end{align}

This is true if either \eqref{eq::2agent_b_pos_condition} holds or, as we compute by squaring both sides of \eqref{eq::solve_muastarminus_pos}, if
\begin{align}
    \label{eq::2agent_condition_gamma_greater_ratio}
    \frac{p_2}{p_1} \leq \gamma\,.
\end{align}
Since
\begin{align}
    \frac{p_2}{p_1} \leq \frac{2}{p_1} - 1
\end{align}
condition \eqref{eq::2agent_condition_gamma_greater_ratio} is sufficient for for showing that $\nu_{+} \geq 0 $.

To show that $\nu_{+} \leq 1$, we need that
\begin{align}
    (\gamma +1) p_1 - 2 + \sqrt{\frac{p_1}{p_2}}\Delta &\leq 2(\gamma - 1)
    \\
    \sqrt{\frac{p_1}{p_2}}\Delta &\leq 2 \gamma - (\gamma +1) p_1
\end{align}

We use the same technique as before. The quantity $ 2 \gamma - (\gamma +1) p_1$ is always positive. Squaring both sides gives that we need to meet the condition that
\begin{align}
    \frac{p_1}{p_2} \leq \gamma\,.
\end{align}

Thus we get that if 
\begin{align}
    \max \left\{\frac{p_1}{p_2},\frac{p_2}{p_1} \right\} \leq \gamma
\end{align}
then $0\leq \nu_{+} \leq 1$ and there is an equilibrium point where $\mu_a = \nu_{+}$. For the equilibrium point to be an interior point, we replace the inequality conditions with strict inequality. 

We also need to show that $\mu_b$ is between $0$ and $1$ when these conditions are satisfied. However, we get this by using symmetry to swap opinion $0$ for opinion $1$ (or the same as looking at the dynamics of $\mu_b^1$). In which case, agent $b$ has bias parameter $\gamma >1$ and the same equations hold. 

We then use 
\begin{align}
    \beta_a^* = f(\nu_{+}, \gamma)
\end{align}
to get 
\begin{align}
    \beta_a^* &= \frac{\gamma \left((\gamma + 1) p_1 p_2 - 2 p_2 + \sqrt{p_1 p_2 }\Delta\right) }{(\gamma - 1) ((\gamma + 1) p_1 p_2 + \sqrt{p_1 p_2 }\Delta) }
\end{align}
and by symmetry we get $\beta_b^*$ by using \eqref{eq::reciprocal_for_f}. This gives
\begin{align}
    \beta_b^* &= \frac{2 \gamma p_1 - (\gamma+1)p_1 p_2 - \sqrt{p_1 p_2} \Delta}{(\gamma - 1) ((\gamma + 1) p_1 p_2 + \sqrt{p_1 p_2 }\Delta) }\,.
\end{align}
\end{proof}



\subsubsection{Proof of \Cref{prop::simplified_community_consensus}}

\label{sec::proof_simplified_community_consensus}

\begin{proof}[]
We specifically write this proof for the case when when  $\bbeta(t) \to \bzero$. Converging to $\bone$ follows by symmetry.
To determine if $(\beta_a, \beta_b) = (0,0)$ is a stable equilibrium point, we check the eigenvalues of 
\begin{align}
    \dermat{\bzero} &= 
     \begin{bmatrix}
    \gamma & 0 \\0 &\frac{1}{\gamma} 
    \end{bmatrix}
    \begin{bmatrix}
    p_1 & (1-p_1) \\ (1-p_2) & p_2
    \end{bmatrix}
    \\&= \begin{bmatrix}
   \gamma  p_1  & \gamma(1-p_1) \\ 
    \frac{1}{\gamma}(1-p_2)  & \frac{1}{\gamma}(p_2)
    \end{bmatrix}\,.
\end{align}
The point $(\beta_a, \beta_b) = (0,0)$ is a stable equilibrium if all eigenvalues are less than $1$.

The eigenvalues of $\dermat{\bzero}$ are given by the solution to
\begin{align}
    \lambda^2 - \left(\gamma p_1 + \frac{p_2}{\gamma} \right) \lambda + p_1 + p_2 - 1 = 0\,.
\end{align}
The larger eigenvalue is given by
\begin{align}
    \lambda_{+} = \frac{1}{2}\left(\gamma p_1 + \frac{p_2}{\gamma} \right) + \sqrt{\frac{1}{4}\left(\gamma p_1 + \frac{p_2}{\gamma} \right)^2 + 1 - p_1 - p_2}\,.
\end{align}
To have $\lambda_{+}< 1$, we need
\begin{align}
    \sqrt{\frac{1}{4}\left(\gamma p_1 + \frac{p_2}{\gamma} \right)^2 + 1 - p_1 - p_2} \leq 1 - \frac{1}{2}\left(\gamma p_1 + \frac{p_2}{\gamma} \right)\,.
\end{align}
We square both sides since the right-hand side is always nonnegative. This results in the inequality
\begin{align}
    \gamma p_1 + \frac{p_2}{\gamma} \leq p_1 + p_2
\end{align}
which holds when (if $\gamma > 1$)
\begin{align}
    \gamma \leq \frac{p_2}{p_1}\,.
\end{align}
\end{proof}

\subsubsection{Symmetric Simplified Community Network}

\label{sec::analysis_symmetric_community}

Consider when the community network is symmetric, so let $p = p_1 = p_2$. First, we analyze whether the conditions for consensus (given by  \Cref{thm::eig_does_not_converge}) hold. We need to check the eigenvalues of $\dermat{\bx}$ (defined in \eqref{eq::dermat_def}) for $\bx = \bzero$ and $\bone$. 
For $(\beta_a, \beta_b) = \bzero$,
\begin{align}
    \dermat{\bzero} = \bGamma \bW = 
    \begin{bmatrix}
        \gamma p  & \gamma (1-p) \\
        \frac{1}{\gamma}(1-p) & \frac{1}{\gamma} p  
    \end{bmatrix}
\end{align}
and for $(\beta_a, \beta_b) = \bone$,
\begin{align}
    \dermat{\bone} = \bGamma^{-1} \bW = \begin{bmatrix}
         \frac{1}{\gamma} p & \frac{1}{\gamma}(1-p)  \\
        \gamma (1-p)&  \gamma p 
    \end{bmatrix}\,.
\end{align}




By computing eigenvalues, we find that both $\dermat{\bzero}$ and $\dermat{\bone}$ have an eigenvalue $\lambda_{+}$ where
\begin{align}
    \lambda_{+} 
    & = \frac{p \left(\frac{1}{\gamma} + \gamma \right)  + \frac{1}{\gamma}\sqrt{(\gamma^2 + 1)^2 p^2 + 4\gamma^2 (1-2p)}}{2}
     \\
    & > \frac{2 p  + \sqrt{4 p^2 + 4 (1-2p) }}{2} \\
    & = 1
    \,.
\end{align}
Thus, using \cref{thm::eig_does_not_converge}, the symmetric simplified community network does not converge to a boundary equilibrium point. This is in line with the results of \Cref{prop::simplified_community_consensus}.

Next, we examine $\dermat{\bx}$ for the interior equilibrium point $\bx = (\beta_a, \beta_b)$. Because we assumed this network is symmetric, we take advantage of the relation that $\beta_a = 1 - \beta_b$ and $\mu_a = 1 - \mu_b$ at the equilibrium point. To compute \eqref{eq::dermat_def} for this equilibrium point, we need to compute
\begin{align}
\diag\left(\begin{bmatrix}
     \frac{\gamma_1}{(1+(\gamma_1 - 1)\mu_a)^2}, 
      \frac{\gamma_n}{(1+(\gamma_n - 1)\mu_b)^2}
    \end{bmatrix}\right)\,.
\end{align}
The relation $\mu_a = 1 - \mu_b$ gives that
\begin{align}
    \frac{\gamma}{(1+(\gamma - 1) \mu_a)^2} &= \frac{\gamma}{(1+(\gamma - 1) (1-\mu_b))^2}
    \\ & = \frac{\gamma}{(\gamma - (\gamma - 1)\mu_b)^2}
\end{align}
so we get that at $\bx = (\beta_a, \beta_b)$
\begin{align}
    \dermat{\bx} = 
     \frac{\gamma}{(1+(\gamma - 1) \mu_a)^2}
     \begin{bmatrix}
    p & 1-p \\ 1-p & p
    \end{bmatrix}
    \,.
\end{align}
Ignoring the scalar factor,
the eigenvalues of matrix are $1, -1 + 2p$. 
So long as $ \frac{\gamma}{(1+(\gamma - 1) \mu_a)^2} < 1$, all the eigenvalues are less than $1$. 
This occurs if
\begin{align}
    \mu_a >\frac{\sqrt{\gamma} - 1}{\gamma - 1} = \frac{1}{\sqrt{\gamma} + 1}\label{eq::mua_stable_condition}
    \,.
\end{align}



It turns out that \eqref{eq::mua_stable_condition} is true for all $\gamma > 1$ and all $p \in [0, 1]$. 
Therefore, the matrix $\dermat{\bx}$ has all eigenvalues less than $1$. We conclude that the symmetric simplified community network converges to its interior equilibrium point $\bx$ with probability $1$ (since there are no other interior equilibrium points). 

\fi



\end{document}